\theoremstyle{plain}
\newenvironment{soundness}{\paragraph{Soundness:}}{}
\newcommand{\VS}[1]{#1} 	
\newcommand{\VL}[1]{} 	
\newcommand{\cim}[1]{}
\newcommand{\couic}[1]{}
\newcommand{\ports}{\!\!:\!\!}
\newcommand{\reconsider}[1]{\textcolor{gray}{}}
\begin{document}

\title{Causal dynamics of discrete manifolds}
\author[1]{Pablo Arrighi}
\author[3]{Cl\'ement Chouteau}
\author[1]{Stefano Facchini}
\author[4]{Simon Martiel}

\address[1]{Aix-Marseille Univ., Universit\'e de Toulon, CNRS, LIS, Marseille, and IXXI, Lyon, France\\
\email{\{pablo.arrighi,stefano.facchini\}@univ-amu.fr}}
\address[3]{Inria, LSV, ENS Paris-Saclay 61, avenue du Pr\'esident Wilson, 94235 Cachan Cedex, France}
\address[4]{Atos/Bull, Quantum R\&D, 78340 Les Clayes-sous-Bois, France\\
\email{simon.martiel@atos.net}}

\maketitle

\begin{abstract}
We extend Cellular Automata to time-varying discrete geometries. In other words we formalize, and prove theorems about, the intuitive idea of a discrete manifold which evolves in time, subject to two natural constraints: the evolution does not propagate information too fast; and it acts everywhere the same. For this purpose we develop a correspondence between complexes and labeled graphs. In particular we reformulate the properties that characterize discrete manifolds amongst complexes, solely in terms of graphs. In dimensions $n<4$, over bounded-star graphs, it is decidable whether a Cellular Automaton maps discrete manifolds into discrete manifolds.\medskip\\
\end{abstract}

\textbf{\textit{Keywords}}. {\small Causal Graph Dynamics, Crystallizations, Gems, Balanced complexes, Combinatorial manifolds, graph-local Pachner moves, Bistellar, Inverse shellings, Homeomorphism, Regge-calculus, Causal Dynamical Triangulations, Spin networks.}

\section{Introduction}

Discrete geometry refers to discretizations of continuous geometries, i.e. piecewise-linear manifolds, that can be abstracted as combinatorial objects such as simplicial complexes, etc. But it may also refer to mere graphs/networks equipped with their natural graph distance. This ambiguity is common in Computer Science, but also in Physics. For instance in discrete/quantized versions of General Relativity, spacetime is discretized as simplicial complexes (Regge-calculus) or in the basis of spin networks graphs (Loop Quantum Gravity). This raises the question of a thorough comparison between simplicial complexes, and graphs. 

A natural way of approaching this question is to seek to encode complexes into labeled graphs. Then, a natural way to encode a complex into a labeled graph, is to map: each simplex $u$ into a vertex $u$; each facet $u\ports a$ of the simplex into a port $u\ports a$ of the vertex $u$; each gluing between facets $u\ports a$ and $v\ports b$ into an edge $(u\ports a,v\ports b)$; each possible way of rotating/articulating this gluing as a label $\gamma$ carried by this edge---see Fig. \ref{fig:correspondance} and  \ref{fig:permutations}.  We formalize this correspondence in Section \ref{sec:complexesasgraphs}. Notice that by `complex', we really mean `pseudo-manifold' here, i.e. each facet of a simplex is attached to one other facet at most. Notice also that the precise choices of ports may not matter, so long as the edges between them represent oriented gluings of simplices. There is, therefore, a local rotation symmetry.

This works well, but a non-often emphasized problem arises. Consider triangles hinging around a point, as in the bottom left of Fig. \ref{fig:correspondance}. The geometrical distance between the two extreme tetrahedrons is one, since they share a point. But the graph distance between their corresponding vertices is three, and the path between them could be made---they are not graph neighbors. There is a discrepancy between the two distances, which we characterize in Section \ref{sec:distances}. Faced with this discrepancy, we have two options.

One option is to forget about geometrical distance altogether. Indeed, if one thinks of each tetrahedron as a room (as in the movie Cube, say), then it is the graph distance that matters. In Section \ref{sec:CDC}, we develop a theory of Causal Dynamics of Complexes (CDC). CDC evolve complexes in discrete time steps, subject to two natural constraints: the evolution does not propagate information too fast; and it acts everywhere the same. This is thanks to the concept of Causal Graph Dynamics (CGD), which we recall. We prove that the CGD which commute with local rotations, can always be implemented with rotation-commuting local rules---a property which in turn is decidable. Then, the previously developed correspondence between complexes and labeled graphs readily allows us to reinterpret these rotation-commuting CGD, as CDC. CDC are already interesting as a mathematically rigorous framework in which to cast the more pragmatical simplicial complex parallel rewrite system of \cite{GiavittoMGS}, or in order to explore causal dynamics of Causal Dynamical Triangulations \cite{LollCDT}, \`a la \cite{MeyerLove}. 

Another option is to take geometrical distance into account. Then, looking at the complex at this larger scale, and in dimension $3$ and above, unravels new concerns. For instance  Fig. \ref{fig:torsion} has the topology of a pinched ball (think of a balloon compressed between two fingers until they touch each other). This is not a manifold, since the neighborhood of the compression point is not a ball. Thus, since the cycle length is arbitrary, the property of being a manifold, is non-local. To make matters worse, the two extreme tetrahedrons could have been glued in a torsioned manner, see Fig \ref{fig:torsion} again. A somewhat radical solution to these concerns is to restrict to complexes such that, even in the geometrical distance, neighborhoods are bounded. Then, the discrepancy between the geometrical and the graph distances is linearly bounded. Another motivation for considering these `bounded-star complexes' is if the next state of a tetrahedron is computed from that of the geometrically neighboring tetrahedrons: we may want this neighborhood to be bounded, whether for practical purposes (e.g. efficiency of a finite-volume elements methods) or theoretical reasons (e.g. computability from the finiteness of the local update rule; finite-density as a physics postulate). Finally, this is a way to prevent sudden geometrical distance collapse---see Fig. \ref{fig:boundedstar}. 

In Section \ref{sec:pachner} we characterize manifolds. In continuous geometries, a manifold is characterized by the neighborhood of every point being homeomorphic to a ball. Over simplicial complexes, this translates into the neighborhood of every simplex being homeomorphic to a ball, where the notion of homeomorphism is captured by a finite set of local rewrite rules, often referred to as Pachner moves (technically, bistellar moves plus shellings and their inverses). These moves are reformulated in terms of graph moves; but the obtained graph moves are not graph-local. We show that a subset of these graph moves is just as expressive as the Pachner moves, whilst enjoying the property of being graph-local. That way, the properties that characterize discrete manifolds are reformulated in terms of graph-local moves. 

In Section \ref{sec:CDDM}, we show that it is decidable whether a CGD is torsion-free bounded-star discrete-manifold preserving---in dimensions less than four. Then, the correspondence between complexes and labeled graphs readily allows us to reinterpret these, as Causal Dynamics of Discrete Manifolds (CDDM).  

We conclude in Section \ref{sec:conclusion} with a discussion of the result and their connection with past and future works, as well as a detailed comparison with the crystallizations/gems alternative. 

\section{Complexes as graphs}\label{sec:complexesasgraphs}

\begin{figure}\begin{center}
\includegraphics[scale=0.4]{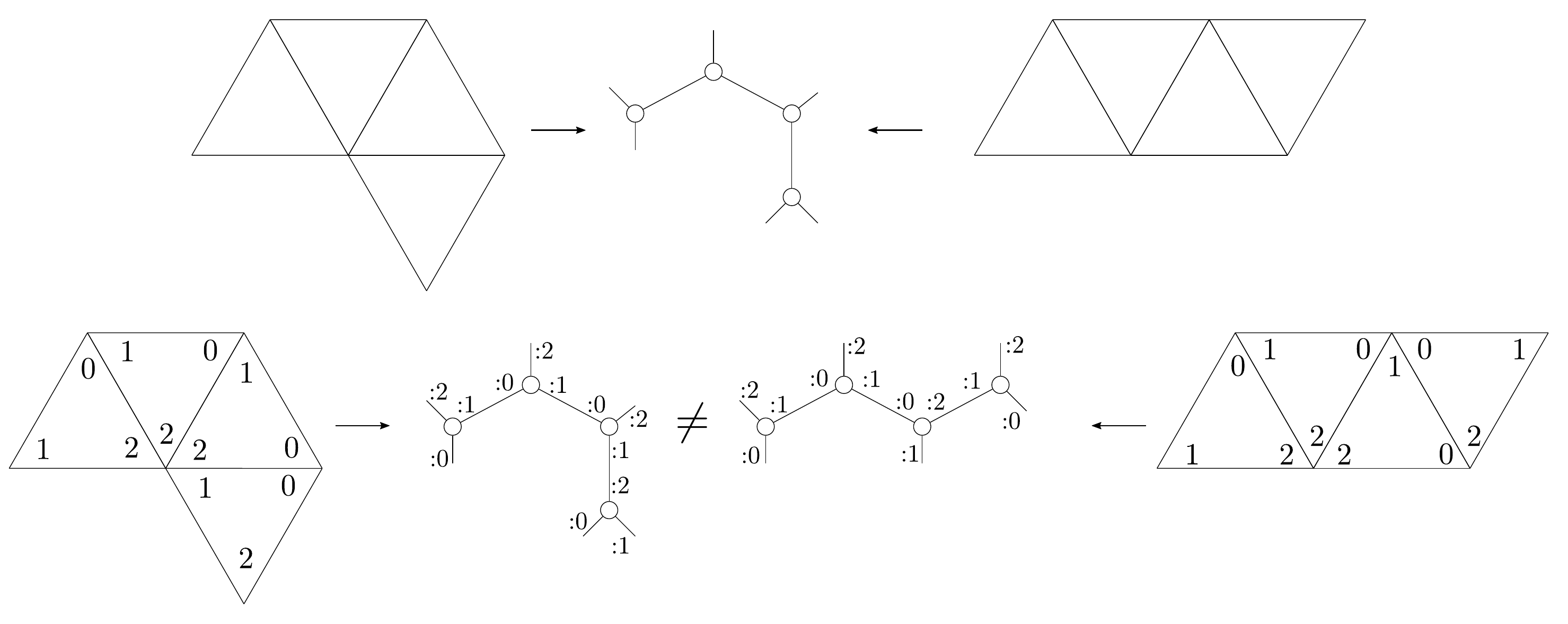}
\end{center}\vspace{-0.5cm}
\caption{Complexes as graphs. \label{fig:correspondance}
{\em Top row}. The naive way to encode complexes as graphs is ambiguous.
{\em Bottom row}. Encoding colored complexes instead lifts this ambiguity. }
\end{figure}

The naive way is to map each simplex to a vertex $v$, and each gluing between facets to an edge $\{u,v\}$. The problem, then, is that we can no longer tell one facet from another, which leads to ambiguities (see Fig. \ref{fig:correspondance} {\em Top row.}). A first solution attempt is to consider {\em colored simplicial complexes} instead. In these complexes, each of the $n+1$ points of a $n$--simplex has a different color. Now we can map each simplex to a vertex, and each gluing between facets to an edge, but now this edge $\{u\ports p,v\ports q\}$ holds the colors of the points that are opposite the glued facets (see Fig. \ref{fig:correspondance} {\em Bottom row.}). The problem, now, is that as soon as we consider $3$--dimensional complexes, there are three different, rotated/articulated ways of gluing two tetrahedrons along two given facets (see Fig. \ref{fig:permutations}). We must therefore provide a permutation $\gamma$ telling us which points identifies with whom, on the edges. Because these permutations are not, in general, involutions, we must direct our edges. Odd permutations correspond to oriented gluings. Altogether this leads to the following definition, which is an elaborated version of \cite{ArrighiCGD,ArrighiIC,ArrighiCayleyNesme}:

\begin{figure}[!h]
\begin{center}
\includegraphics[scale=1.0]{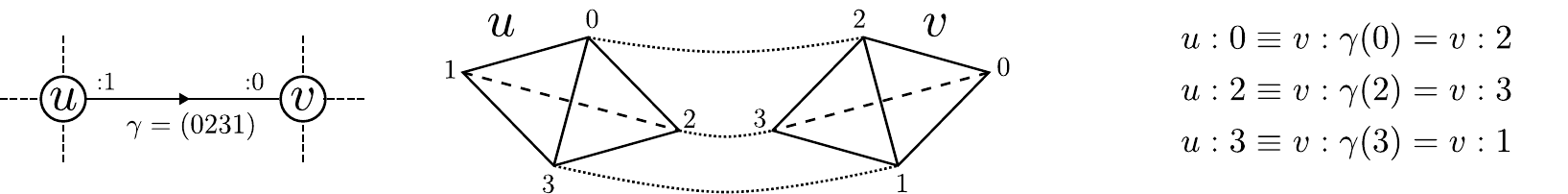}
\end{center}
\caption{The different, rotated/articulated ways of gluing of two tetrahedrons along two given facets are specified on the edges. \label{fig:permutations}}
\end{figure}

\begin{definition}[Graphs, Disks]\label{def:graphs}
Let $V$ be an infinite countable set referred to as the `universe of names'. Let $\Sigma$ be a finite set referred to as the `internal states'. Let $n$ stand for the spatial dimension. Let $\pi$ be ${0\ldots\ n+1}$ referred to as the `set of  ports'. Let $\Gamma$ be the $(n+2)!/2$ odd permutations of $\pi$, referred to as `gluings'. A {\em graph} $G$ is given by 
\begin{itemize}
\item[$\bullet$] A subset $V(G)$ of $V$---whose elements are called vertices.
\item[$\bullet$] A function $\sigma : V(G) \rightarrow \Sigma$ associating to each vertex its label.
\item[$\bullet$] A set $S(G)$ of elements of the form $(u\ports p)$ with $u\in V(G)$, $p\in \pi$---whose elements are called semi-edges.
\item[$\bullet$] A set $E(G)$ of elements of the form $(u\ports p, \gamma,v\ports q)$ with $u, v \in V(G)$, $p,q \in \pi$, $\gamma\in \Gamma$---whose elements are called edges.
\end{itemize}
This is with the conditions that
\begin{itemize}
\item if $(u\ports p)\in S(G)$ then there is no $(u\ports p, \gamma,v\ports q) \in E(G)$.
\item if $(u\ports p, \gamma,v\ports q) \in E(G)$ then there is no other $(u\ports p, \gamma',v'\ports q') \in E(G)$.
\item each vertex has exactly $n+1$ ports, i.e. appearing in $S(G)\cup E(G)$.
\item if $(u\ports p, \gamma,v\ports q) \in E(G)$ then $\gamma$ must map the $n+1$ ports of $u$ into the $n+1$ ports of $v$, with $\gamma(p)=q$.
\item if $(u\ports p, \gamma,v\ports q) \in E(G)$ then $(v\ports q, \gamma^{-1}, u\ports p) \in E(G)$.
\end{itemize}
The set of graphs is denoted $\mathcal{G}$. Given a graph $G$, we write $G^r_v$ for its disk of radius $r$ centered on $v$, i.e. its subgraph induced by those vertices that lie at graph distance less or equal to $r+1$ from $v$ in $G$, breaking outgoing edges into semi-edges. The set of disks of radius $r$ is denoted $\mathcal{D}^r$.
\end{definition}

\begin{figure}[h]\begin{center}
\includegraphics[scale=1.0]{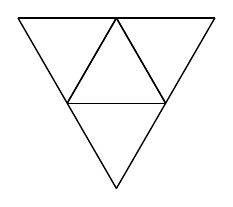}\includegraphics[scale=1.0]{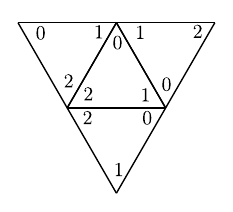}\includegraphics[scale=1.0]{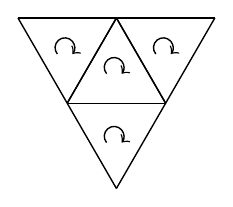}
\caption{Complexes, Colored complexes, Oriented Complexes \label{fig:complexes}}
\end{center}\end{figure}

These graphs correspond to colored complexes, i.e. gluings of simplices whose points have colors. 
Colored simplicial complex are not uncommon, but certainly not as common as oriented complexes, however---see Fig. \ref{fig:complexes}. If we wish to remove colors, we must allow for the simplices to rotate freely. On graphs this corresponds to reshuffling the ports in $\pi$ according to an even permutation, i.e. a rotation. 
\begin{definition}[Vertex rotations and symmetries]
Let $G$ be a graph, $u$ one of its vertex and $r$ an even element of $\Pi$. Then, a {\em vertex rotation} $r_u$ is the application of $r$ at $u$. More precisely, $G'=(r_u)G$, is such that
\begin{itemize}
\item $V(G')=V(G)$.
\item $E(G^\prime)$ and $S(G^\prime)$ are the image of $E(G)$ and $S(G)$ under the map:
\begin{itemize}
\item $(u\ports p, \gamma,v\ports q) \mapsto (u\ports r(p), \gamma\circ r^{-1},v\ports q)$.
\item $(v\ports q, \gamma^{-1},u\ports p) \mapsto (v\ports q, r\circ \gamma^{-1},u\ports r(p))$.
\item $(u\ports p) \mapsto (u\ports r(p))$.
\end{itemize}
\item $\sigma'(u)=h(r)(\sigma(u))$, whereas $\sigma'(v)=\sigma(v)$ for $v\neq u$,
\end{itemize}
where $h$ is a given homomorphism between the group of permutations $\Gamma$ over $\pi$, and a group of transformations $h(\Gamma)$ over $\Sigma$. A rotation sequence $\overline{r}$ is a finite composition $\prod r^i_{u_i}$, with $r^i$ some rotations, and $u_i$ some vertices. When $s$ is an odd element of $\Pi$, we can similarly define a {\em vertex symmetry} $s_u$ and a symmetry sequence $\overline{s}$. We use $s_{ij}$ as a shorthand notation for the flip between $i$ and $j$.
\end{definition}
\reconsider{
\begin{soundness}
We have to check that in the graph $G^\prime$ image of $G$ by $r@u$ the two conditions on the edges respected by graphs are still valid in $G^\prime$.
For the edges not linked to $u$ it is obvious.
Else the other edges in $E(G^\prime)$ are of the form $(u:r(p), \gamma^\prime, v:q)$, we have:
\begin{enumerate}
\item $(v\ports q, r\circ \gamma^{-1},u\ports r(p))$ is the inverse of $(u\ports r(p), \gamma\circ r^{-1},v\ports q)$.
\item $\gamma\circ r^{-1}(r(p)) = \gamma(p) = q$ and $r\circ \gamma^{-1}(q) = r(p)$.
\end{enumerate}
\end{soundness}
Notice that, since the composition of two odds permutation is even, applying a symmetry to a strict subset of $V(G)$ produces an object which is not a graph (\emph{ie} with even permutations on the edges). Moreover, applying two symmetries on the same vertex can be summed up as applying a rotation to the vertex. Therefore, when applying a symmetry sequence, we must ensure that symmetries are applied an even number of time on every vertex of the graph.
We can also assume that a symmetry sequence $\overline{s}$ applied to a graph $G$ is in fact a sequence of $|V(G)|$ vertex symmetries $s_u, u\in V(G)$, as any composition of an odd number of symmetries is a symmetry. From now on, we will denote by $\overline{s}_u$ the symmetry applied to $u$ in a symmetry sequence $\overline{s}$.\\
Second, we define the equivalence relation induced by the rotations. Using this equivalence relation, we can define graphs in which vertices have an orientation rather than an ordering of their edges. 
}
Oriented simplicial complexes correspond to the equivalences classes of our labeled graphs:
\begin{definition}[Rotation Equivalence]
Two graphs $G$ and $H$ are rotation equivalent if and only if there exists a rotation sequence $\overline{r}$ such that $\overline{r} G=H$.
\end{definition}
{\em From now on and in the rest of this paper, we  will let $\Sigma=\varnothing$ in order to simplify notations---although all of the results of this paper carry through to graphs with internal states.}

\section{Graph distance vs geometrical distance}\label{sec:distances}

On the one hand in the world of simplicial complexes, two simplices are adjacent if they share a geometric point (a $0$--face). On the other hand in the world of graphs, two vertices are adjacent if they share an edge. These two notions do not coincide, as shown in Fig. \ref{fig:correspondance}. In order to understand the interplay between geometrical and graph distances, we first express the notion of $k$--face of a given simplex, in graph terms. We then provide graph-based condition that tell whether the $k$--face is shared by another simplex.




The way we express the notion of $k$--face in terms of graphs is as follows. Consider Figure \ref{fig:permutations}. Each port $p$ can be interpreted, geometrically, as the point opposite to where the gluing occurs. Then, a $k$-face $F$ can be described just by the set of ports--points that composes it.
\begin{definition}[Face]\label{def:face}
A $k$--face $F$ at vertex $u$ is a subset $\{p_0,\ldots,p_k\}$ of $k+1$ ports of a vertex $u$.
\end{definition}
Now, if a point $p$ belongs to a $k$--face at $u$, and a simplex $u'$ is glued on port $p$, this simplex no longer contains the $k$--face, as it excludes the point $u:p$. We can use this to characterize geometrically equivalent $k$--faces and hinges around them, i.e. paths along simplices that include them.

\begin{figure}[ht]
    \centering
    \includegraphics[scale=0.7]{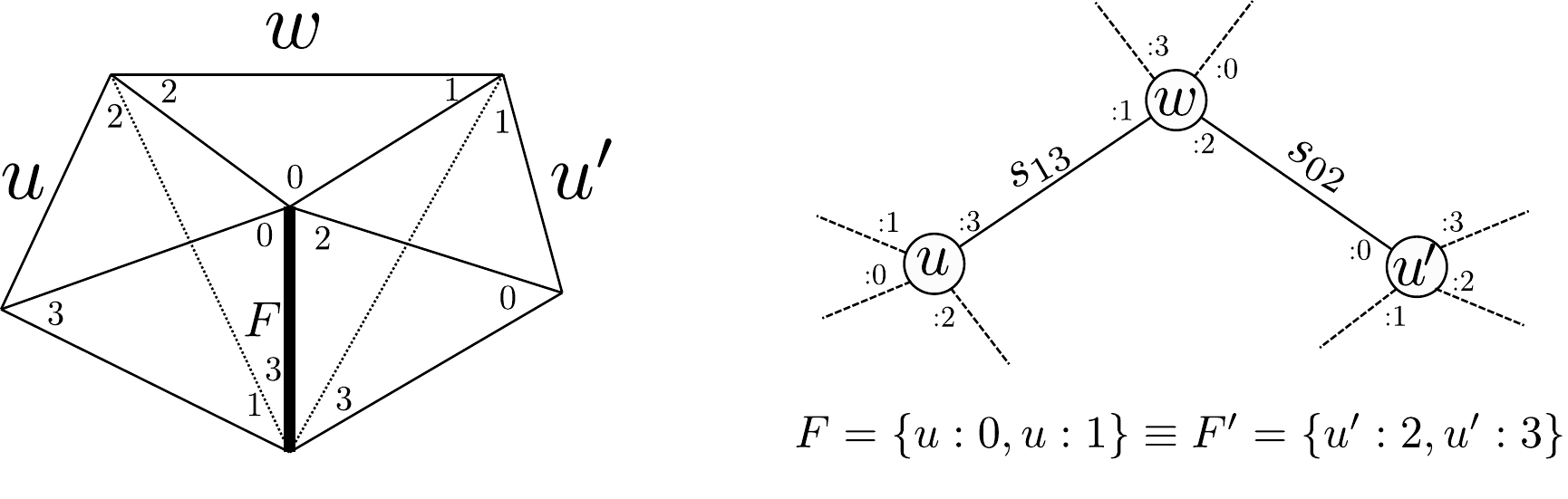}
    \caption{A hinge between $F$ at $u$ and $F'$ at $u'$.}
\end{figure}

\begin{definition}[Hinges between equivalent faces]\label{def:hinge}
Two $k$-faces $F$ at vertex $u$ and $F'$ at vertex $u'$ are said to be equivalent if and only if they are related by a {\em hinge}, i.e. if and only if there exists is a path $(u_i\ports p_i, \gamma_{i+1}, u_{i+1}\ports q_{i+1}) \in E(G)$ with $i = 0\ldots m$, $u_0=u$, $u_{m+1}=u'$, such that :
\begin{align}
p_{i},q_{i}&\notin \left(\prod_{j=1}^i \gamma_j\right) (F)\qquad\textrm{and}\qquad F'=\left(\prod_{j=1}^m \gamma_j\right) (F) \label{eq:hinge}
\end{align} 
where $p_{i}=p_0,\ldots, p_m$, whereas $q_{i}=q_1,\ldots, q_{m+1}$.
\end{definition}
When a gluing occurs on port $u:p$, it `covers' the points $u:\pi\setminus \{p\}$. Conversely, a $k$--face $F$ at $u$ is covered by all those gluings occurring at $\pi\setminus F$.
\begin{definition}[Border face]
Given a $k$-face $F$ at vertex $u$, consider every $F'$ at $u'$ that is equivalent to $F$. Its set of {\em covering semi-edges} is 
$$ S(G)\ \cap\ \bigcup_{u'}\ (u':\pi\setminus F').$$ 
If this set is non-empty, $F$ is a {\em border face}. 
\end{definition}
Sometimes a hinge can be closed-up into a cyclic hinge in a way that identifies a $k$--face, with a rotated/articulated version of itself, as in Fig. \ref{fig:torsion}. 
\begin{definition}[Torsion]
A {\em torsion} is a hinge around two distinct $k$-faces $F$ and $F'$ at $u$.
\end{definition}
Whilst such torsions may be useful in order to model certain kinds of parallel transport, we regard them as undesirable in this paper. Here is one tool to chase them out:
\begin{definition}[Normal form]
A path $\{u_i\ports p_i, \gamma_{i+1}, u_{i+1}\ports q_{i+1}\} \in E(G)$ with $i = 0\ldots m$ is in normal form if and only if for all $i$, $\gamma_{i+1} = s_{p_i q_{i+1}}$ and if the $n+1$ ports of $u_i$ are $\{p_i, r_1, \ldots, r_n \}$, then those of $u_{i+1}$ are $\{q_{i+1}, r_1, \ldots, r_n \}$.
\end{definition}
\begin{proposition} A cyclic hinge that can be put in normal form, is torsion-free. 
\end{proposition}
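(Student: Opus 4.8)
The plan is to prove this in two stages: first reduce to a hinge that is literally in normal form, and then run a short combinatorial induction that transports the face $F$ around the cycle and shows it never moves.

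For the reduction I would observe that torsion-freeness is invariant under the rotation sequences used to bring a hinge into normal form. Write $F_i$ for the face obtained by transporting $F$ along the first $i$ edges of the cyclic hinge, so that $F_0=F$, the successor $F_{i+1}$ is the image of $F_i$ under the single gluing $\gamma_{i+1}$, and $F'$ is the last such face at $u_{m+1}=u_0=u$; a torsion is then precisely the data $F'\neq F$. A vertex rotation $r_{u_i}$ at an interior vertex multiplies the incoming gluing $\gamma_i$ on the left by $r$ and the outgoing gluing $\gamma_{i+1}$ on the right by $r^{-1}$; since these two factors are adjacent in the transport composition $\gamma_{m+1}\circ\cdots\circ\gamma_1$, the insertions $r$ and $r^{-1}$ cancel, leaving the total gluing and hence $F'$ unchanged. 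A rotation at the base vertex conjugates the total gluing, $\Gamma\mapsto r\Gamma r^{-1}$, while transporting $F\mapsto r(F)$, so the relation $\Gamma(F)=F$ is preserved. Thus torsion-freeness depends only on the rotation class, and it suffices to treat a hinge already in normal form.

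For such a hinge I would show $F_{i+1}=F_i$ for every $i$, which immediately yields $F'=F_0=F$ and hence torsion-freeness. By the normal-form hypothesis the gluing crossed at step $i$ is the transposition $\gamma_{i+1}=s_{p_i q_{i+1}}$, which moves only the two labels $p_i$ and $q_{i+1}$ and fixes every other port. The hinge condition \eqref{eq:hinge} gives $p_i\notin F_i$. For the second label I would use that the normal-form condition forces the ports of $u_i$ and $u_{i+1}$ to agree outside the glued pair, namely $\{p_i,r_1,\ldots,r_n\}$ versus $\{q_{i+1},r_1,\ldots,r_n\}$; consequently $q_{i+1}$ is not a port of $u_i$ at all (if it were, it would either equal $p_i$, making $\gamma_{i+1}$ the identity and hence even, or collapse two of the $n+1$ ports of $u_{i+1}$), so a fortiori $q_{i+1}\notin F_i$. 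With both $p_i$ and $q_{i+1}$ outside $F_i$, the transposition $s_{p_i q_{i+1}}$ fixes $F_i$ setwise, giving $F_{i+1}=F_i$.

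I expect the only delicate point to be this middle claim $q_{i+1}\notin F_i$: it rests on reading the normal-form condition as saying that consecutive simplices of the hinge share all their ports except the single glued pair, so that the incoming label $q_{i+1}$ is simply absent at $u_i$ and therefore cannot lie in the transported face. Everything else is routine---a transposition fixing a set disjoint from its support in the second stage, and the left/right cancellation of rotation factors in the first---so the proof reduces to setting up the transport bookkeeping and composition-order conventions carefully enough that these two cancellations are manifest.
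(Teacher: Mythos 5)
Your proof is correct and follows essentially the same route as the paper's: after assuming normal form, you show the transported face is fixed at each step because the transposition $s_{p_i q_{i+1}}$ moves only $p_i$ (excluded by the hinge condition \eqref{eq:hinge}) and $q_{i+1}$ (not a port of $u_i$, since the ports of $u_i$ and $u_{i+1}$ agree outside the glued pair), which is exactly the paper's observation that $s_{p_i q_{i+1}}$ leaves $r_1,\ldots,r_n$ unchanged. Your first stage---verifying that torsion-freeness is invariant under the rotations used to normalize the hinge, via cancellation at interior vertices and conjugation at the base vertex---is a justification the paper leaves implicit in ``say that the hinge has been put in normal form,'' so it is a welcome tightening rather than a different approach.
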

\begin{proof}
Say that the hinge has been put in normal form. Pick $F$ a $k$ face at $u_0$ such that the left equation \eqref{eq:hinge} is verified. In normal form $s_{p_i q_{i+1}}$ leaves $r_1, \ldots r_n$ unchanged. Obviously $p_0\notin F$, hence $s_{p_0 q_1}(F) = F$. And similarly for the next steps. Therefore the $F'$ of the right equation \eqref{eq:hinge} is $F$.
\end{proof}

Generally speaking, chasing out torsions is a difficult thing, because cyclic hinges may be arbitrary long. Unless we make further assumptions.
\begin{definition}[Star, Bounded-star]\label{def:bsgraphs}
Consider a graph $G$ and a vertex $u$ in $G$. A vertex $u'$ in $G$ is said to be a geometrical neighbor of $u$ if and only if they have an equivalent $k$--face. The {\em star} of $u$ is the subgraph induced by $u$ and its geometrical neighbors. It is denoted $\operatorname{Star}(G,u)$. A graph $G$ is said to be {\em bounded-star} of bound $s$ if and only if its hinges are of length less than or equal to $s$. 
\end{definition}

\begin{figure}[th]
\begin{center}
 \includegraphics{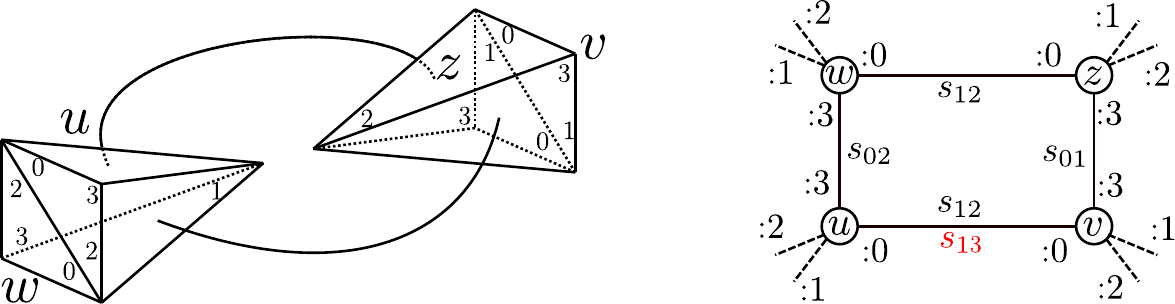}
 \end{center}
 \caption{\label{fig:torsion} With the black gluing between $u$ and $v$, the complex has the topology of a pinched ball, i.e. a doughnut whose hole has collapsed into a point. This constitutes an example of a pseudo-manifold (well-glued simplices) that is not a discrete manifold (the neighborhood of the pinch is not a ball). With the red gluing, the complex is torsioned. For instance, the $0$--faces $\{1\}$ and $\{2\}$ at $u$ are made equivalent in the sense of Def. \ref{def:hinge}.}
\end{figure}

\section{Causal Dynamics of Complexes}\label{sec:CDC}

We now recall the essential definitions of CGD, through their constructive presentation, namely as localizable dynamics. We will not detail, nor explain, nor motivate these definitions in order to avoid repetitions with \cite{ArrighiCGD,ArrighiIC,ArrighiCayleyNesme}. Still, notice that in \cite{ArrighiCGD,ArrighiIC,ArrighiCayleyNesme} this constructive presentation is shown equivalent to an axiomatic presentation of CGD, which establishes the full generality of this formalism. The bottom line is that these definitions capture all the graph evolutions which are such that information does not propagate information too fast and which act everywhere the same, see Fig. \ref{fig:CGD}.
\begin{figure}
\begin{center}
\includegraphics[scale=0.3]{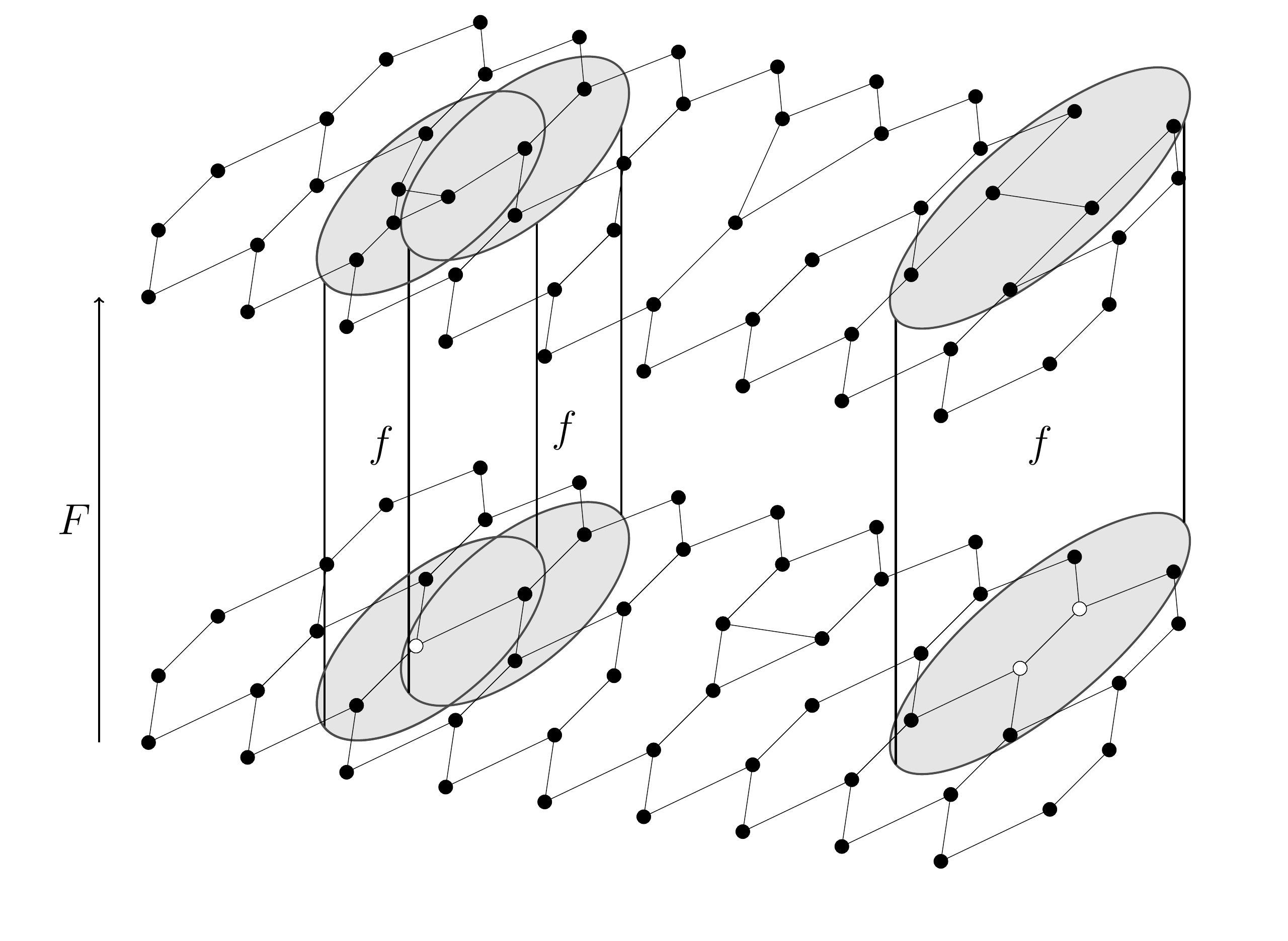}\vspace{-0.75cm}
\end{center}
\caption{{\em A Causal Graph Dynamics.} The whole graph evolves in a causal (information propagates at a bounded speed) and homogeneous (same causes lead to same effects) manner. This was proven equivalent to applying a local function $f$ to each subdisk of the input graph, producing small output graphs whose union make up the output graph.}\label{fig:CGD}
\end{figure}

\begin{definition}[Isomorphism]
An isomorphism is specified by a bijection $R$ from $V$ to $V$ and acts on a graph $G$ as follow:
\begin{itemize}
\item $V(R(G))=R(V(G))$
\item $(u:k, \gamma, v:l)\in E(G) \Leftrightarrow (R(u):k, R\circ \gamma\circ R^{-1}, R(v):l)\in E(R(G))$
\item $(u:k)\in S(G) \Leftrightarrow (R(u):k)\in S(R(G))$
\end{itemize}
Let $b$ be an integer number, and $\mathcal{F}(S)$ denote the finite subsets of a set $S$.
We similarly define the isomorphism $R^*$ specified by the isomorphism $R$ as the function acting on graphs $G$ such that $V(G)\subseteq \mathcal{F}(V.\{\varepsilon,1,...,b\})$, so that
$R^*(\{u.i,v.j,...\})=\{R(u).i,R(v).j,...\}$.
\end{definition}

\begin{definition} [Consistent]
Consider two graphs $G$ and $H$. Let $K=V(G)\cap V(H)$ and $L=V(G)\cup V(H)$. $G$ and $H$ are {\em consistent} if and only if 
for all $u:i$ in $K:\pi$, for all $v:j$ in $L:\pi$,
$$(u: i,\gamma, v: j) \in E(G)\vee (u:i) \in S(G)  \Longleftrightarrow (u: i,\gamma, v: j)\in E(H)\vee (u:i)\in S(H).$$
\end{definition}

\begin{definition}[Local Rule]
A function $f:\mathcal{D}^r \rightarrow \mathcal{G}$ is called a local rule if there exists some bound $b$ such that:
\begin{itemize}
\item For all disk $D$ and $v'\in V(f(D)) \Rightarrow v' \subseteq V(D).\{\varepsilon,1,...,b\}$.
\item For all graph $G$ and disks $D_1,D_2 \subset G$, $f(D_1)$ and $f(D_2)$ are consistent.
\item For all disk $D$ and isomorphism $R$, $f(R(D))=R^*(f(D))$\VL{.}\VS{, with $R^*(\{u.i,v.j,...\})=\{R(u).i,R(v).j,...\}$.}
\end{itemize}
\end{definition}
\VL{
\begin{definition}[Union]
The union $G\cup H$ of two consistent graphs $G$ and $H$ is defined as follow:
\begin{itemize}
\item $V(G\cup H)=V(G)\cup V(H)$
\item $E(G\cup H)=E(G)\cup E(H)$
\end{itemize}
\end{definition}
}
\begin{definition}[CGD]\cite{ArrighiCGD,ArrighiIC,ArrighiCayleyNesme}
A function $F$ from ${\cal G}$ to ${\cal G}$ is a {\em localizable dynamics, a.k.a Causal Graph Dynamics}, or CGD, if and only if there exists $r$ a radius and $f$ a local rule from ${\cal D}^r$ to ${\cal G}$ such that for every graph $G$ in ${\cal G}$, 
$$F(G)=\bigcup_{v\in G} f(G^r_v).$$
\end{definition}
To compute the image graph, a CGD could make use of the information carried out by the ports of the input graph. Thus, though the correspondence developed, they can readily be interpreted as ``Causal Dynamics of Colored Complexes''. If we are interested in ``Causal Dynamics of (Oriented) Complexes'' instead, we need to make sure that $F$ commutes with vertex-rotations. 
\begin{definition}[Rotation-commuting dynamics]
A CGD $F$ is rotation-commuting if and only if for all graph $G$ and all rotation sequence $\overline{r}$ there exists a rotation sequence $\overline{r}^*$ such that $ F(\overline{r} G)=\overline{r}^* F(G) $. Such an $\overline{r}^*$ is called a conjugate of $\overline{r}$.
\end{definition}
For local rules we will need a stronger version of this:
\begin{definition}[Strongly-rotation-commuting local rule]
A local rule $f$ is strongly-rotation-commuting if and only if for all intersecting pairs of disks $G = D_1 \cup D_2$ and for all rotation sequence $\overline{r}$, the conjugate rotation sequences $\overline{r}^*_1$ and $\overline{r}^*_2$ defined through $\overline{r}^*_i f(D_i) = f(\overline{r}D_i)$, $i=1,2$ coincide on $f(D_1) \cap f(D_2)$.
\end{definition}
When is a CGD rotation-commuting? Can we decide, given the local rule $f$ of a CGD $F$, whether $F$ is rotation-commuting? The difficulty is that being rotation-commuting is a property of the global function $F$. Indeed, a first guess would be that $F$ is rotation-commuting if and only if $f$ is rotation-commuting, but this turns out to be false. 
\begin{example}[Identity] Consider the local rule of radius $1$ over graphs of degree $2$ which acts as the identity in every cases but those given in Fig. \ref{fig:noncom}. Because of these two cases, the local rule makes use the information carried out by the ports around the center of the neighborhood. It is not rotation-commuting. Yet, the CGD it induces is just the identity, which is trivially rotation-commuting.
\begin{figure}
\begin{center}
\includegraphics[scale=0.6]{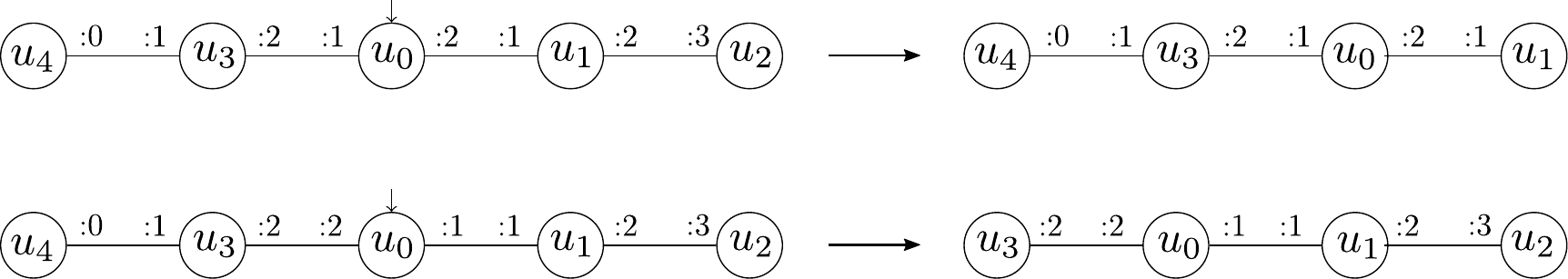}
\end{center}
\caption{A non-rotation commuting local rule which induces a rotation commuting CGD.}\label{fig:noncom}
\end{figure}
\end{example}
Thus, unfortunately, rotation-commuting $F$ can be induced by non-rotation-commuting $f$.
Still, there always exists a strongly-rotation-commuting $f$ that induces $F$.
\begin{proposition}
Let $F$ be a CGD. $F$ is rotation-commuting if and only if there exists a strongly-rotation-commuting local rule $f$ which induces $F$.
\end{proposition}
\begin{proof}(Outline). $[\Leftarrow]$ Trivial. $[\Rightarrow]$ Given a rotation commuting CGD $F$ induced by some local rule $f$ that is not necessarily strongly rotation commuting itself, we construct a local rule $\tilde{f}$ that is a strongly-rotation-commuting and still induces $F$. The construction uses the fact that $F$ is rotation commuting to force $\tilde{f}$ to adopt an homogeneous behavior over the sets of disks of the form $\{\overline{r}D~~|~~\overline{r}\ \textrm{ a rotation sequence}\}$ (i.e rotation equivalent copies of the same disk). The complete proof is in Appendix \ref{app:pfrotationcommuting}.
\end{proof}\\
The point of this proposition is that having made this global property, local, makes it decidable. 
\begin{proposition}[Decidability of rotation commutation]
Given a local rule $f$, it is decidable whether $f$ is strongly-rotation-commuting.
\end{proposition}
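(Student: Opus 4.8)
The plan is to show that the predicate ``$f$ is strongly-rotation-commuting'' reduces to a finite, effectively checkable computation, by bounding every object that appears in its definition. First I would note that everything in sight is bounded. Since each vertex carries exactly $n+1$ ports and $\Sigma=\varnothing$, a disk of radius $r$ has at most $\sum_{k=0}^{r+1}(n+1)^k$ vertices, and the only labels it can carry are ports in $\pi$ and gluings in $\Gamma$, both finite; hence $\mathcal{D}^r$ is finite up to isomorphism. The isomorphism-equivariance clause $f(R(D))=R^*(f(D))$ of the local-rule definition guarantees that $f$ respects isomorphism classes, so it suffices to work with one representative per class, and on each the output $f(D)$ is a graph whose vertices lie in $V(D).\{\varepsilon,1,\dots,b\}$, hence of bounded size and explicitly computable. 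Finally, an intersecting pair $G=D_1\cup D_2$ has at most twice the vertices of a single disk, so the configurations $(G,D_1,D_2)$, up to isomorphism and together with the designation of the two centres, form a finite, enumerable family.

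Next I would control the rotations. Although a rotation sequence $\overline r$ may be arbitrarily long, its net effect on any fixed finite graph $H$ is captured by a single element of $\prod_{v\in V(H)}\Pi^+$, where $\Pi^+$ denotes the finite group of even permutations of $\pi$. Indeed, rotations applied to the same vertex compose inside $\Pi^+$ (applying $r$ then $r'$ at $u$ sends the source port $p\mapsto (r'\circ r)(p)$ and the gluing $\gamma\mapsto \gamma\circ(r'\circ r)^{-1}$), and a direct check on the rewriting rules defining vertex rotations shows that rotations at distinct vertices commute: the rule for an edge with source $u$ right-composes its gluing with $r^{-1}$, the rule for an edge with target $w$ left-composes with $r'$, and these two operations act independently on that edge. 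Consequently only finitely many distinct graphs $\overline r H$ can be produced, and they can all be listed by ranging over the finite set $\prod_{v\in V(H)}\Pi^+$.

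The algorithm then reads as follows: for each finite configuration $G=D_1\cup D_2$ and each of the finitely many rotations $\overline r$ of $G$, compute the four bounded graphs $f(D_1)$, $f(D_2)$, $f(\overline r D_1)$, $f(\overline r D_2)$. For $i=1,2$, search the finite set $\prod_{v\in V(f(D_i))}\Pi^+$ for a conjugate $\overline r^*_i$ satisfying $\overline r^*_i f(D_i)=f(\overline r D_i)$. Because $D_1$ and $D_2$ are subdisks of a common graph, $f(D_1)$ and $f(D_2)$ are consistent by definition of a local rule, so $f(D_1)\cap f(D_2)$ is a well-defined bounded graph on their shared vertices; it then only remains to test whether $\overline r^*_1$ and $\overline r^*_2$ act identically on those vertices. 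Declaring $f$ strongly-rotation-commuting exactly when this test succeeds for every configuration and every $\overline r$ yields the decision procedure, since each step ranges over a finite set.

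The main obstacle I anticipate is the handling of the conjugates, which the statement presents as ``the'' conjugate rotation sequences but which need not be unique: a vertex of $f(D_i)$ carrying many semi-edges may admit several rotations realising the same target graph, so the equation $\overline r^*_i f(D_i)=f(\overline r D_i)$ can have a whole family of solutions. I would resolve this by being explicit about the quantifier, namely first verifying that a conjugate exists at all (that $f(\overline r D_i)$ really is a rotation of $f(D_i)$), and then testing agreement on $f(D_1)\cap f(D_2)$ for the appropriate choice of conjugates; both are finite searches over $\prod_{v}\Pi^+$. The remaining bookkeeping, that two radius-$r$ disks overlap in only finitely many ways and that the naming convention $V(D).\{\varepsilon,\dots,b\}$ lets one identify the shared vertices of $f(D_1)$ and $f(D_2)$, is routine once the finiteness of the rotation action has been established.
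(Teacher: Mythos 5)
Your proposal is correct and takes essentially the same route as the paper's proof: reduce everything to a finite exhaustive search, using the isomorphism-equivariance clause of the local-rule definition to restrict to finitely many disks up to renaming, and the fact that rotation sequences on a finite graph have only finitely many net effects (one even permutation of $\pi$ per vertex, since rotations at a common vertex compose and rotations at distinct vertices commute, cf.\ Lemma \ref{lem:rot_comm}), so that conjugates can be sought by brute force over bounded outputs. If anything your version is more thorough than the paper's terse argument, which only tests the single-disk commutation $f(r_u D)=\overline{r}^*f(D)$ and leaves implicit both the defining clause of \emph{strong} commutation---agreement of conjugates on $f(D_1)\cap f(D_2)$ over intersecting pairs $G=D_1\cup D_2$---and the non-uniqueness of conjugates, both of which you correctly identify and handle by finite search.
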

\begin{proof} There exists a simple algorithm to verify that $f$ is strongly-rotation-commuting.
Let $r$ be the radius of $f$.
We can check that for all disk $D\in\mathcal{D}^r$ and for all vertex rotation $r_u$, $u\in V(D)$, we have the existence of a rotation sequence $\overline{r}$ such that $f(r_u D)=\overline{r}^*f(D)$.\\ As the graph $f(D)$ is finite, there is finite number of rotation sequences $\overline{r}^*$ to test. Notice that as $f$ is a local rule, changing the names of the vertices in $D$ will not change the structure of $f(D)$ and thus we only have to test the commutation property on a finite set of disks.
\end{proof}
\begin{definition}[CDC]\label{def:CDC}
A {\em Causal Dynamics of Complexes} is a rotation-commuting CGD.
\end{definition}

\section{Pachner Moves}\label{sec:pachner}

\noindent {\em Bistellar moves.} Given a tetrahedron $\Delta_3$, there is a canonical way to obtain is border, $\partial \Delta_3$, as four glued triangles. In terms of graphs, given a single vertex of degree $4$, there is a canonical way to obtain a graph made of four vertices of degree $3$ that represents its border. This works as follows: 1. Interpret the vertex as a colored tetrahedron; so that each point has a color; 2. Reinterpret each facet as a vertex, and each gluing along a segment, as an edge between the ports of colors that of the points opposite the segment. This is the way we obtain:

\begin{definition}[$\partial \Delta_{n+1}$]\label{def:sphere}
We call the canonical sphere of dimension $n$, and denote $\partial \Delta_{n+1}$, the complete graph of size $n+2$ having vertices $v_0,...,v_{n+1}$ and edges of the form $(v_i\ports j,s_{ij},v_j\ports i)$ for $i\neq j$ and $i,j \in \{0,...,n+1\}$.
\end{definition}
{\bf  Soundness.} All hinges are in normal form hence not torsioned. \hfill $\Box$

A triangle $H$ can always be viewed as being a subcomplex of the boundary of a tetrahedron. Its complement with respect to the tetrahedron yields three other triangles $H^*$. More generally and in terms of graphs, whenever $H$ is a subgraph of $\partial \Delta_{n+1}$, we can construct its complement $H^*$ with respect to $\partial \Delta_{n+1}$. 

When we have a triangle $H$ lying inside a larger complex $G$, we can decide to replace $H$ by $H^*$ in $G$. This amounts to subdividing it into three, see Fig. \ref{fig:bist}. More generally and in terms of graphs, whenever $H$ is an induced subgraph of $\partial \Delta_{n+1}$ and lies inside a larger graph $G$, we can decide to replace $H$ by $H^*$ in $G$. A bistellar move does exactly that: it replaces a piece a sphere by its complement, it is intuitive therefore that it is a homeomorphism:
\begin{figure}[ht]
    \begin{center}
    \includegraphics[scale=0.7]{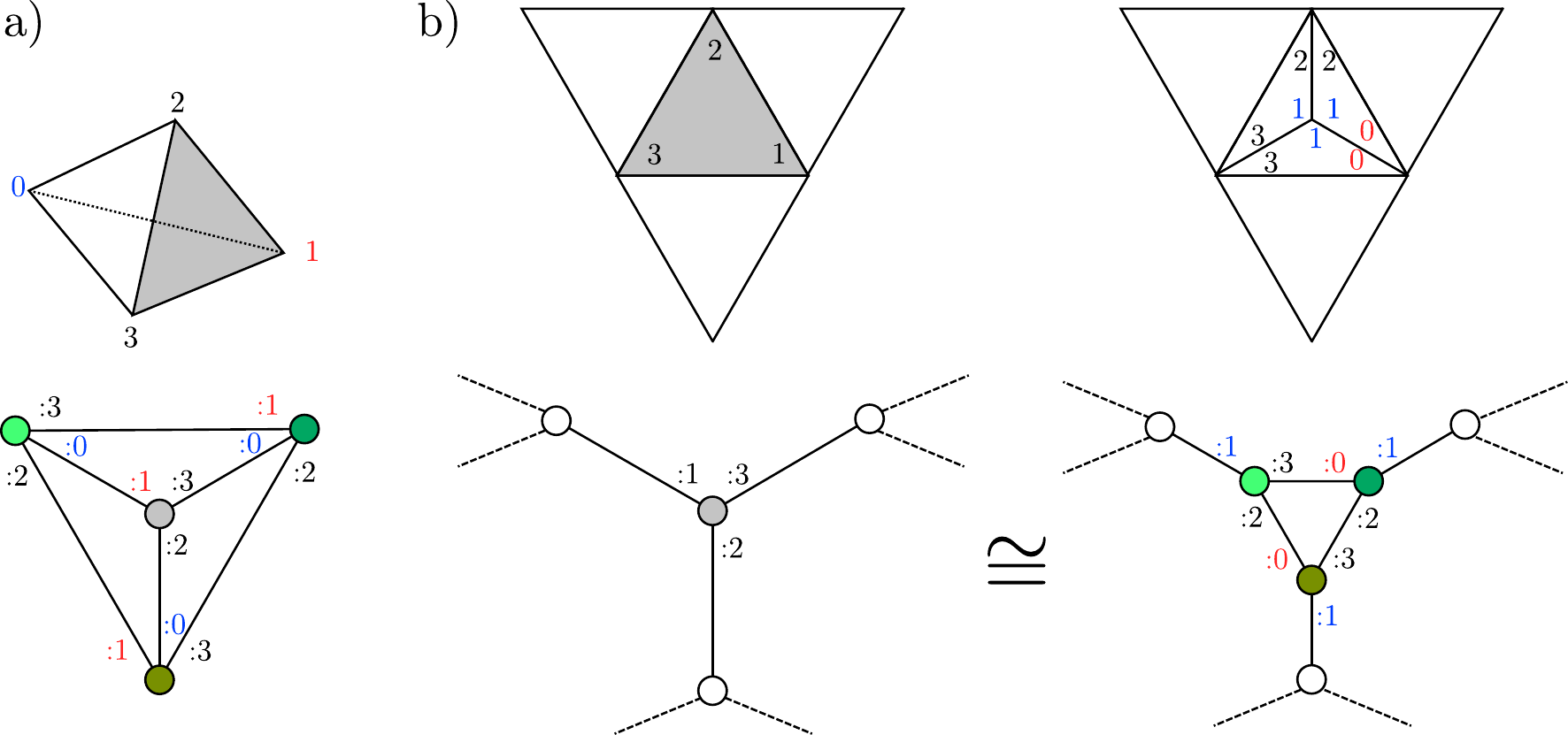}
    \end{center}\label{fig:bist}
    \caption{$a)$ The canonical sphere of dimension $2$. $b)$ The bistellar move obtained by taking as the subgraph $H$, the single vertex in gray.}
\end{figure}
\begin{definition}[Bistellar move $G.H$]
Let $G$ be a graph and $H$ be a subgraph of $G$ such that $H$ is a strict subset of a $\partial \Delta_{n+1}$, and $(G\setminus H) \cap \partial\Delta_{n+1} = \emptyset$. Let us call $\overline{s}$ the symmetry sequence $(s_{01})_{u\in H^*}$. The graph $G.H$ is the graph where $H$ has been replaced by $\overline{s}H^*$ as follows.
First, add $\overline{s}H^*$ to the graph. Second, for each edge $e = (v\ports p, \gamma ,u\ports q)$ between a vertex $v$ of $H$ and a vertex $u$ of $G\setminus H$, notice there is a unique edge $e'=(v'\ports p', \gamma'  , v\ports p ) \in E(\overline{s}\partial\Delta_{n+1})$, and replace both $e$ and $e'$ by the edge $(v'\ports p', \gamma \circ \gamma' , u\ports q)$. Similarly, for every semi-edge $e=(v:p)$ of $H$, notice there is a unique edge $e'=(v'\ports p', \gamma'  , v\ports p ) \in E(\overline{s}\partial\Delta_{n+1})$, and replace both $e$ and $e'$ by the semi-edge $(v'\ports p')$. Third, remove the vertices of $H$.
\end{definition}
Notice that $\overline{s}$ flips the orientation of $H^*$ relative to $H$ in order to match the orientation of $G$, and that the use of $s_{01}$ for this purpose is without loss of generality, as one should also allow for vertex rotations.

\noindent {\em Shellings.}  Given a $2$--dimensional complex having a triangle with two of its sides on the boundary, we can decide to grow the complex by gluing two sides of a new triangle there. Similarly and in terms of graphs, whenever a vertex of degree $n+1$ has $k$ free ports, we can decide to connect them with $k$ ports of a new, otherwise unconnected vertex. A graph-local inverse shelling indeed consists in adding a new vertex to a graph by connecting it to a vertex having free ports:
\begin{definition}[Graph-local (inverse) shellings]\label{def:shelling}
Let $G$ be a graph and $u$ a vertex of $G$. Let $S$ be a subset of at most $n$ free ports of $u$, i.e. such that $(u:p)\in S(G)$ for all $p\in S$. The graph $G.S$ is the graph where a fresh vertex $v$ has been added, as well as edges  $(u:p, s_{01} , v:s_{01}(p))$. We say that $G.S$ is an {\em graph-local inverse shelling} of $G$, and conversely that $G$ is a {\em graph-local shelling} of $G.S$.
\end{definition}
There is difference, however, between this graph-local notion of shelling and the standard notion of shelling upon complexes. Indeed, as was pointed out in Section \ref{sec:complexesasgraphs}, in a $2$--dimensional complex two boundary segments may be consecutive without this locality being apparent in the corresponding graph. Standard inverse shellings are definitely more general, as they allow gluing a fresh triangle there. Phrased in terms of graphs, they translate into:
\begin{definition}[Standard (inverse) shellings]\label{def:stdshelling}
Let $G$ be a graph, $u$ be a vertex of $G$, and $F$ be a border $k$-face at $u$, having exactly $n-k$ covering semi-edges $(u_i:p_i)$. The graph $G.F$ is the graph where a fresh vertex $v$ has been added, and each semi-edge $(u_i:p_i)$ has been replaced by an edge  $(u_i:p_i, s_{01} , v:s_{01}(p_i))$, without creating any torsion. We say that $G.F$ is an {\em standard inverse shelling} of $G$, and conversely that $G$ is a {\em standard shelling} of $G.F$.
\end{definition}
As an example of this definition, consider filling, with a new tetrahedron $v$, the hole in the ball at the top of Fig. \ref{fig:shellings} a), as in the top of Fig. \ref{fig:shellings} c). The $0$--face $F$ stands for the geometrical point that will be covered by $v$. As the three covering semi-edges of $F$ will be replaced by edges, $F$ will no longer be a border face. Indeed, although $v$ introduces a new semi-edge, that one is not a covering semi-edge of $F$.\\

\begin{figure}[ht]
	\begin{center}
    \includegraphics[scale=0.7]{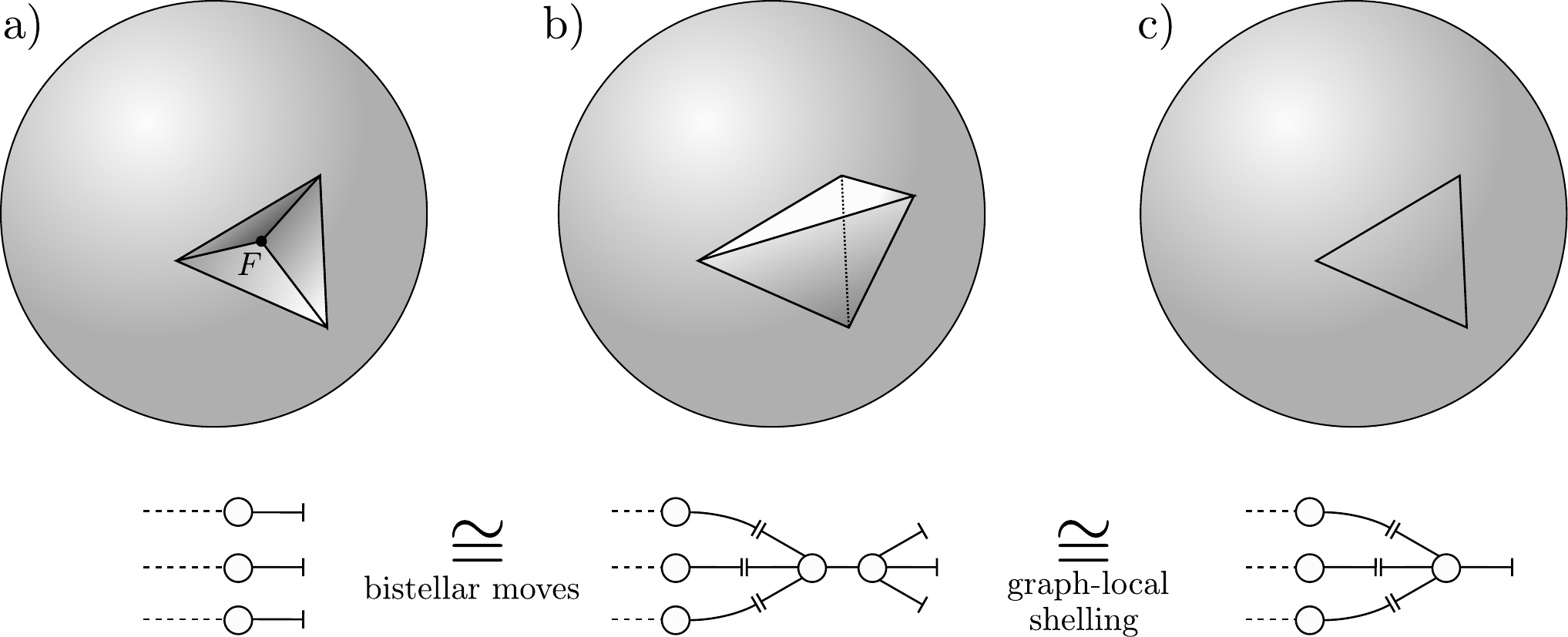}
    \end{center}
    \caption{The standard inverse shelling obtained via bistellar moves and a graph-local shelling.}\label{fig:shellings}
\end{figure}

Fortunately, standard (inverse) shellings can always be recovered from a succession of rotations, Bistellar moves, graph-local (inverse) shellings: 
\begin{definition}[Graph-local Pachner moves]\label{def:pachner}
We call {\em graph-local Pachner moves} the union of vertex rotations, bistellar moves and graph-local (inverse) shellings.
\end{definition}
\begin{proposition}[Recovering standard shellings]\label{prop:stdshelling}
Standard (inverse) shellings are compositions of graph-local Pachner moves. 
\end{proposition}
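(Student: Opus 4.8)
The plan is to realise every standard shelling as a finite composition of the three graph-local Pachner moves of Definition~\ref{def:pachner}. I would first observe that this list of moves is closed under inversion: a vertex rotation is undone by the inverse rotation, the bistellar move sending $H$ to $H^*$ is undone by the one sending $H^*$ to $H$ (since $(H^*)^*=H$), and a graph-local inverse shelling is undone by the corresponding graph-local shelling. Hence it suffices to treat standard \emph{inverse} shellings; the statement for shellings then follows by reading the resulting compositions backwards. So I fix $G$, a vertex $u$, and a border $k$-face $F$ at $u$ with exactly $n-k$ covering semi-edges, and aim to obtain $G.F$ from $G$.

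Next I would pin down the local picture. By Definition~\ref{def:hinge} the vertices carrying the covering semi-edges of $F$ are the extremities of the hinge around $F$; geometrically this hinge is the fan of simplices sharing $F$, and the standard inverse shelling of Definition~\ref{def:stdshelling} closes this fan by inserting a single simplex $v$ glued, through $s_{01}$-edges, to the covering facets at the two ends of the hinge. The essential difficulty is that this hinge can be arbitrarily long and its two extremities arbitrarily far apart in the graph, whereas a graph-local inverse shelling (Definition~\ref{def:shelling}) only ever attaches a fresh vertex along free ports of one vertex. The idea, as in Figure~\ref{fig:shellings}, is therefore to use bistellar moves to fold the hinge down to a minimal length, perform the attachment graph-locally there, and unfold again.

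Concretely I would argue by induction on the length $m$ of the hinge around $F$. For the base case, $m$ minimal, the two covering facets sit on one vertex or on two adjacent vertices, and one graph-local inverse shelling attaches $v$; since that move typically creates a fresh apex vertex and thus changes the number of $0$-faces, whereas $G.F$ reuses existing points, I would pair it with a bistellar move (a $1\!\leftrightarrow\!(n+1)$-type move inside the local copy of $\partial\Delta_{n+1}$ of Definition~\ref{def:sphere}) that absorbs the spurious apex, after the rotations needed to bring the edges into the $s_{01}$-orientation expected by $\overline{s}$. For the inductive step I would apply one bistellar move realising a flip in the link of $F$, which shortens the hinge to length $m-1$ and produces a graph $G'$; by the induction hypothesis $G'.F$ is reachable from $G'$ by graph-local Pachner moves; finally I would apply the inverse of that same flip inside $G'.F$, which restores the original hinge while leaving the inserted simplex $v$ in place, yielding exactly $G.F$.

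The step I expect to be the main obstacle is making this unfolding rigorous. One has to check at each stage that the subgraph carrying a bistellar move is an induced, strict subgraph of some $\partial\Delta_{n+1}$ with the required disjointness from the rest of $G$, so that the move is legal; that the composed gluing permutations reproduce precisely the $s_{01}$-edges prescribed by Definition~\ref{def:stdshelling} once the orienting rotations are accounted for; that the flip used to shorten the hinge genuinely commutes with the insertion of $v$ (so that the final un-flip reconstructs $G.F$ rather than some torsioned variant); and, above all, that torsion-freeness is preserved throughout, so that every intermediate $G'.F$ and the final $G.F$ are well defined in the sense of Definition~\ref{def:stdshelling}. Here the normal-form criterion, and the proposition that a cyclic hinge in normal form is torsion-free, are the tools I would invoke to certify the absence of torsion at each stage; termination is guaranteed because the hinge length strictly decreases.
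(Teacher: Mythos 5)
Your strategy is genuinely different from the paper's, and it contains gaps that go beyond the ones you flag. The paper does not fold and unfold the hinge at all: it forms the auxiliary graph $G' = G.F.S$, where after the desired standard inverse shelling adding the fresh vertex $v$ one further attaches, by a graph-local inverse shelling, a second fresh vertex along the set $S$ of all remaining free ports of $v$. The graphs $G$ and $G'$ are then homeomorphic and have the same border, so by the theorem of \cite{Casali} they are related by a sequence of bistellar moves; a single graph-local shelling (removing the auxiliary vertex) finally yields $G.F$. All of the global combinatorial work is thus delegated to \cite{Casali}, whereas your induction attempts to reconstruct that bistellar sequence by hand, and this is where it breaks. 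First, your local picture --- a one-dimensional fan of simplices around $F$ whose two extremities carry the covering semi-edges --- is correct only in codimension two ($k=n-2$, where $n-k=2$). For a general border $k$-face the $n-k$ covering semi-edges may sit on pairwise distant simplices: for $n=3$, $k=0$, the link of $F$ is a triangulated disk and the three covering triangles can be far apart in it, the dual of the star of $F$ being nothing like a path. There ``hinge length'' is not a well-founded induction measure, and the ``flip in the link of $F$ that shortens the hinge'' you postulate would have to be replaced by a bistellar-connectivity statement about link triangulations --- essentially the content of the theorem you should instead be citing. Second, even in codimension two you assert without argument that some bistellar move shortens the hinge while keeping $F$ a border face with exactly $n-k$ covering semi-edges (as the inductive hypothesis requires), and that the final un-flip commutes with the insertion of $v$. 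The latter is delicate precisely because the flip region sits at the hinge extremity, i.e.\ on the very facets where $v$ gets glued; after insertion those ports are no longer free, so the legality of the inverse move (induced strict subgraph of a $\partial \Delta_{n+1}$, the disjointness condition, and torsion-free composed gluings matching the $s_{01}$ prescription of Definition~\ref{def:stdshelling}) must be re-established rather than inherited, and your appeal to the normal-form criterion is only stated, not carried out.

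Your base case is also off, in both directions. When all $n-k$ covering semi-edges lie on a single vertex $u$, the standard inverse shelling of Definition~\ref{def:stdshelling} coincides \emph{verbatim} with the graph-local inverse shelling $G.S$ of Definition~\ref{def:shelling} with $S=\{p_i\}$ (up to vertex rotations adjusting the $s_{01}$ orientations): no spurious apex is created, so the extra bistellar move you introduce to ``absorb'' it solves a non-problem. When the covering semi-edges lie on two adjacent vertices, however, a graph-local inverse shelling can glue $v$ only to free ports of one vertex, and no bistellar move can supply the missing identification afterwards: by construction a bistellar move maps semi-edges bijectively to semi-edges (it preserves the border complex), whereas completing the gluing of $v$ would make two free facets disappear. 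So this configuration cannot be closed by your ``attach then absorb'' pattern; in the paper it is reachable only through the global Casali sequence. The natural repair of your proof is therefore to abandon the induction and reduce, as the paper does, to the same-border homeomorphism between $G$ and $G.F.S$ followed by one graph-local shelling.
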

\begin{proof}
Consider a graph $G$ with a border $k$-face $F$ having exactly $n-k$ covering semi-edges as in Fig. \ref{fig:shellings} a). 
We want to perform the standard inverse shelling $G.F$, adding a fresh vertex $v$, using only graph-local Pachner moves. As an intermediate step, consider $G'$ the graph $G.F.S$ where $S$ is the set of semi-edges of $v$, as in Fig. \ref{fig:shellings} b).
The graphs $G$ and $G'$ are homeomorphic and have the same border, therefore they are related by a sequence of bistellar moves, as was shown by \cite{Casali}. Finally, by a graph-local shelling we obtain $G.F$ as in \ref{fig:shellings} c). 
\end{proof}

In the setting of simplicial complexes, Pachner moves \cite{Pachner,Lickorish} are well-known to generate all the homeomorphisms between combinatorial manifolds, and only the homeomorphisms. As a corollary of the above proposition the same holds true for graph-local Pachner moves: 
\begin{definition}[Discrete manifold]
A graph $G$ of degree $|\pi|=n+1$ is discrete manifold if and only if for each vertex $u\in V(G)$, there exists a sequence of graph-local Pachner moves sending $Star(G,u)$ onto $\Delta_n$.
\end{definition}
\begin{corollary}[Homeomorphism]\label{th:homeo} Consider $M$ and $M'$ two piecewise-linear manifolds, and let $G$ and $G'$ be the discrete manifolds obtained as their respective triangulations into simplicial complexes. $M$ and $M'$ are piecewise-linearly homeomorphic if and only if $G$ and $G'$ are related by a sequence of graph-local Pachner moves.
\end{corollary}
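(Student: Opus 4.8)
The plan is to deduce the statement from the classical Pachner theorem (recalled above, \cite{Pachner,Lickorish}) by transporting it across the correspondence of Section \ref{sec:complexesasgraphs}, using Proposition \ref{prop:stdshelling} to absorb the one genuine discrepancy between the two worlds. The classical theorem asserts that two combinatorial manifolds are piecewise-linearly homeomorphic if and only if they are connected by a finite sequence of \emph{standard} Pachner moves, namely bistellar moves together with standard (inverse) shellings. Since $G$ and $G'$ are by hypothesis the graph encodings of triangulations of $M$ and $M'$, and are therefore combinatorial manifolds, the whole task reduces to proving that ``connected by standard Pachner moves on the complexes'' is equivalent to ``connected by graph-local Pachner moves on the graphs'' (Definition \ref{def:pachner}).

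For the forward implication I would start from a sequence of standard Pachner moves relating the two triangulations, which exists by the classical theorem. Read through the correspondence, a bistellar move is already a graph-local Pachner move by construction; a standard (inverse) shelling (Definition \ref{def:stdshelling}) is, by Proposition \ref{prop:stdshelling}, a composition of vertex rotations, bistellar moves and graph-local (inverse) shellings (Definition \ref{def:shelling}), hence itself a composition of graph-local Pachner moves. Concatenating these finitely many decompositions yields a single sequence of graph-local Pachner moves relating $G$ and $G'$.

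For the converse I would check that each of the three generators of the graph-local Pachner moves induces a piecewise-linear homeomorphism of the underlying polyhedra. Vertex rotations only permute the per-simplex port labelling of a single vertex and thus leave the encoded oriented complex unchanged (they act within a single rotation-equivalence class), so they act as the identity on $|M|$. Bistellar moves and graph-local shellings are, by their defining descriptions as the replacement of a piece of $\partial\Delta_{n+1}$ by its complement, respectively as the gluing of a fresh simplex along free facets, instances of the classical bistellar and shelling operations, which are piecewise-linear homeomorphisms. A composition of piecewise-linear homeomorphisms being one, $M$ and $M'$ are piecewise-linearly homeomorphic.

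The step I expect to be the main obstacle is not either implication taken in isolation but certifying that the correspondence of Section \ref{sec:complexesasgraphs} is faithful enough for this translation to be lossless, in particular the bookkeeping of orientations and of the per-simplex port labellings. One must verify that the labelling freedom, which on the graph side is precisely the vertex-rotation (and symmetry) ambiguity, is fully absorbed by the vertex rotations that are included among the graph-local Pachner moves, so that no genuine homeomorphism is missed and no spurious identification is introduced. Granting this faithfulness, the corollary follows simply by assembling the classical Pachner theorem with Proposition \ref{prop:stdshelling}.
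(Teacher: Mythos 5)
Your proposal is correct and follows essentially the same route as the paper, which states the corollary without further proof precisely because it is the classical Pachner theorem \cite{Pachner,Lickorish} combined with Proposition~\ref{prop:stdshelling} (standard shellings decompose into graph-local Pachner moves) and the observation that each graph-local generator is itself a PL homeomorphism, with vertex rotations acting trivially on the encoded oriented complex. Your closing remark on the faithfulness of the encoding is a reasonable point of care, but it is exactly what Section~\ref{sec:complexesasgraphs} and the rotation-equivalence formalism are set up to guarantee, so no gap remains.
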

Notice that, albeit expensive computationally, it is decidable whether a $n<4$--dimensional complex is homeomorphic to $\Delta_n$, see \cite{Lazarus} and \cite{Kuperberg} (Prop. 3.1). Homeomorphism in general becomes undecidable for $n\geq 4$ \cite{Lazarus}. Notice also that discrete manifolds are not always simplicial complexes. For instance the self-glued triangle is not a simplicial complex, as points of the same simplex get identified. In dimension $n\leq 2$, this remark seems innocuous, as any discrete manifold is related, via Pachner moves, to a simplicial complex. In dimension $n=3$, we conjecture that this is still the case if and only if each simplex has no more that two identified points, and that one extra move suffices to make this true in all cases.

\section{Causal Dynamics of Discrete Manifolds}\label{sec:CDDM}

The results in this section crucially rely on 
\begin{lemma}[Past subgraph]\cite{ArrighiIC}\label{lem:pastsubgraph}
Consider $F$ a CGD induced by the local rule $f$ of radius $r$ (i.e. diameter $d=2r+1$). 
Consider a graph $G$, a vertex $v$ in $G$, a vertex $v'$ in $f(G^r_v)$, and a disk ${F(G)}^{r'}_{v'}$ (i.e. of diameter $d'=2r'+1$).
Then this disk is a subgraph of $F(G^{2rr'+r+r'}_v)$. Notice that the disk $G^{2rr'+r+r'}_v$ has diameter $d''=d'd$.
\end{lemma}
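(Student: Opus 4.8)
The plan is to exploit the defining decomposition $F(G)=\bigcup_{w\in V(G)}f(G^r_w)$ and to trace every vertex, edge and semi-edge of the target disk $F(G)^{r'}_{v'}$ back to a \emph{producer}: an input vertex $w$ whose local computation $f(G^r_w)$ already contains it. Two structural facts drive the argument. \emph{Provenance}: because $F(G)$ is a union of the small output graphs, each vertex, edge and semi-edge of $F(G)$ lies in some $f(G^r_w)$; in particular an edge of $F(G)$ and both of its endpoints admit a common producer. \emph{Name-locality}: by the local-rule constraint $V(f(D))\subseteq\mathcal{F}(V(D)\ports\{\varepsilon,1,\dots,b\})$, every vertex produced by $w$ carries a name assembled solely from vertices of the input disk $G^r_w$, hence from vertices at bounded graph distance from $w$. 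Combined, these force two producers of one output vertex to share an input vertex, and therefore to lie close together in $G$.

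From here I would prove a causality estimate and induct on $r'$. Fix a shortest path $v'=x_0,x_1,\dots,x_k$ in $F(G)$ with $k$ at most the radius of the disk. The base case uses the hypothesis that $v$ itself is a producer of $x_0=v'$. At each step, provenance supplies a producer of the edge $x_{i}x_{i+1}$, which is also a producer of $x_i$; name-locality then bounds its $G$-distance to the previously chosen producer by at most one input disk-width $d=2r+1$. Telescoping over the $r'$ steps and adding the radius $r$ of the final producer's own input disk yields a producer within $G$-distance $R-r$ of $v$ for every vertex, edge and semi-edge of $F(G)^{r'}_{v'}$, with $R=r'd+r=2rr'+r+r'$. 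This is the clean content of the diameter identity $(2r+1)(2r'+1)=2R+1$, i.e.\ $d''=dd'$: an output ball of width $d'$ pulls back to an input ball of width $d\,d'$.

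It then remains to transfer the relevant local computations from $G$ to the restriction $G^R_v$. For each producer $w$ located above, $d_G(v,w)\le R-r$ guarantees that the whole input disk $G^r_w$ sits inside $G^R_v$, so restricting the ambient graph neither deletes a vertex nor an edge that the rule at $w$ inspects; hence $(G^R_v)^r_w=G^r_w$ and $f((G^R_v)^r_w)=f(G^r_w)$. Taking the union of these unchanged outputs over all producers of the disk reconstructs every vertex, edge and semi-edge of $F(G)^{r'}_{v'}$ inside $F(G^R_v)=\bigcup_{w\in V(G^R_v)}f((G^R_v)^r_w)$, which is exactly the asserted inclusion.

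I expect the boundary bookkeeping to be the main obstacle, on two fronts. Conceptually, the delicate point is the faithfulness of restriction for producers $w$ near the frontier of $G^R_v$: one must rule out that passing from $G$ to $G^R_v$ silently turns an edge of $G^r_w$ into a semi-edge or drops a boundary vertex, which is precisely why $R$ must exceed the producer distance by a full disk-radius $r$, and why the boundary semi-edges of the disk have to be matched carefully against the edges of $F(G^R_v)$. Technically, the fiddly part is threading the disk convention (a radius-$\rho$ disk keeps vertices up to distance $\rho+1$ and breaks outgoing edges into semi-edges) through the induction so that the increments sum to exactly $2rr'+r+r'$ rather than to a merely $O(rr')$ bound; the identity $d''=dd'$ is the checkpoint confirming that the constants have been tracked correctly.
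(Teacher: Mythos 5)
Note first that this paper contains no proof of the lemma: it is imported verbatim from \cite{ArrighiIC}, so the only available comparison is with the proof there. Your strategy is exactly the mechanism of that proof: trace each vertex, edge and semi-edge of $F(G)^{r'}_{v'}$ back to a producer $w$ with the element in $f(G^r_w)$, use the naming constraint $V(f(D))\subseteq\mathcal{F}(V(D).\{\varepsilon,1,\dots,b\})$ to force producers of a common output vertex to have intersecting input disks, telescope along a shortest output path, and finally invoke faithfulness of restriction. The last step you handle correctly: if $d_G(v,w)\le R-r$ then every vertex of $G^r_w$ lies within distance $R+1$ of $v$ along geodesics that stay inside $G^R_v$, so $(G^R_v)^r_w=G^r_w$, $f((G^R_v)^r_w)=f(G^r_w)$, and the union over producers reconstructs the disk inside $F(G^R_v)$.

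The genuine gap is the quantitative step you assert rather than prove, and as stated it does not hold under this paper's disk convention. Sharing an output vertex only yields a common \emph{input} vertex $u$ with $d_G(w,u)\le r+1$ and $d_G(u,w')\le r+1$ (a radius-$r$ disk keeps vertices up to distance $r+1$), hence $d_G(w,w')\le 2r+2=d+1$, not the disk-width $d=2r+1$ you use; moreover $F(G)^{r'}_{v'}$ contains the ring of vertices at output distance $r'+1$, so the producer chain has up to $r'+1$ links, not $r'$. Telescoping then gives $d_G(v,w)\le (r'+1)(2r+2)$, which overshoots the required $R-r=2rr'+r'$ by $2r+r'+2$. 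So your argument proves the lemma only with a larger (still $O(rr')$ and computable) radius --- which would in fact suffice for every later use in this paper, since the decidability propositions only need \emph{some} effective past-subgraph bound, but it does not establish the stated constant $2rr'+r+r'$; recovering it requires the finer accounting (and slightly different disk/naming conventions) of \cite{ArrighiIC}, which is precisely the bookkeeping you flagged but left open. A second, smaller hole: name-locality gives no distance bound for an output vertex whose name is the empty set, so you are implicitly assuming names are nonempty; this should be stated or discharged as in the cited source.
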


\noindent  {\em Bounded-star preserving.} We will now restrict to CGD so that they preserve the property of a graph being bounded-star. Indeed, we have seen that graph distance between two vertices does not always correspond to the geometrical distance between the two triangles that they represent. With CDC, we were guaranteeing that information does not propagate too fast with respect to the graph distance, but not with respect to the geometrical distance. The fact that the geometrical distance is less than or equal to the graph distance is falsely reassuring: the discrepancy can still lead to unwanted phenomenon as depicted in Fig. \ref{fig:boundedstar}.
\begin{figure}\begin{center}
\includegraphics[width=\textwidth]{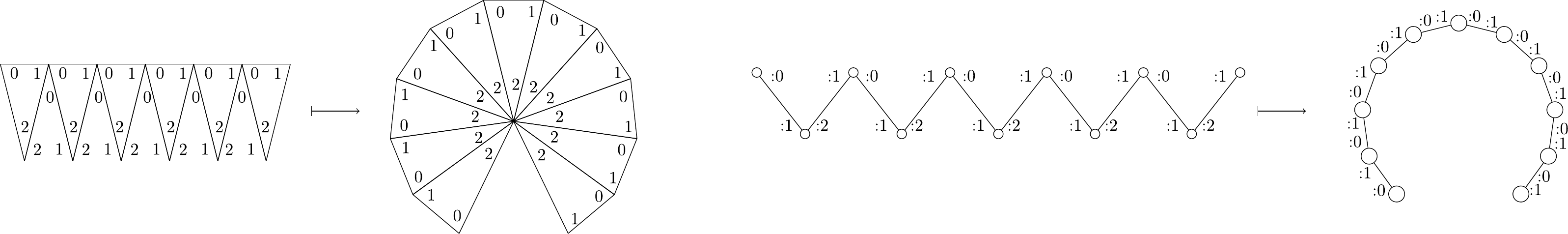}
\caption{An unwanted evolution: sudden collapse in geometrical distance. \label{fig:boundedstar}
{\em Left:} in terms of complexes. {\em Right:} In terms of graph representation.}
\end{center}\end{figure}
Of course we may choose not to care about geometrical distance. But if we do care, then we must not let that happen. One solution is to make the graphs are $s$--bounded-star. This will relate the geometrical distance and the graph distance by a factor $s$. As a consequence, the guarantee that information does not propagate too fast with respect to graph distance will induce its counterpart in geometrical distance. This will forbid the sudden collapse in geometrical distance of Fig. \ref{fig:boundedstar}. More generally it will enforce a bounded-density of information principle. Of course, we must then ensure that the CGD we use preserve $s$--bounded-star graphs:
\begin{definition}[Bounded-star preserving]
A CGD $F$ is \emph{bounded-star preserving with bound $s$} if and only if for all $s$--bounded-star graph $G$, $F(G)$ is also $s$--bounded-star.
\end{definition}
\begin{proposition}\label{prop:boundedstar}
Consider $F$ a CGD induced by a local rule $f$ of radius $r$. 
$F$ is bounded-star preserving with bound $s=2r'$ if and only if for any $s$--bounded-star $D$ in ${\cal D}^{2rr'+r+r'}$, $F(D)$ is also $s$--bounded-star. Therefore, given a local rule $f$, it is decidable whether its induced $F$ is bounded-star preserving. 
\end{proposition}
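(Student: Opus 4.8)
The statement packages a global-to-local characterization together with a decidability corollary, and I would treat the two directions of the ``if and only if'' very asymmetrically. The forward direction is essentially immediate: a disk $D\in\mathcal{D}^{2rr'+r+r'}$ is itself an element of $\mathcal{G}$, so if it happens to be $s$--bounded-star, then the hypothesis that $F$ is bounded-star preserving applies to $D$ directly and yields that $F(D)$ is $s$--bounded-star. All the content is in the converse, and the plan there is the same ``localize via the past-subgraph lemma'' strategy used earlier for rotation-commutation: the point of choosing $s=2r'$ is precisely to make ``being $s$--bounded-star'' a property that can be read off from radius-$r'$ disks.

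The technical core I would isolate first is a purely combinatorial localization lemma: \emph{a graph is $s$--bounded-star if and only if every one of its radius-$r'$ disks is $s$--bounded-star}. Two observations drive this. First, a prefix of a hinge (Def.~\ref{def:hinge}) is again a hinge---the exclusion conditions $p_i,q_i\notin(\prod_{j\le i}\gamma_j)(F)$ are preserved under truncation---so a graph fails to be $s$--bounded-star exactly when it carries a hinge of length precisely $s+1=2r'+1$. Second, such a hinge spans $2r'+2$ vertices, so the vertex at its middle sees every other hinge vertex at graph distance at most $r'+1$; since a radius-$r'$ disk retains (Def.~\ref{def:graphs}) all vertices up to distance $r'+1$ and breaks only genuinely outgoing edges, the entire hinge, edges included, survives inside the disk centered at that midpoint. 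Conversely, every edge of an induced disk is an edge of the ambient graph, so any hinge living in a disk is a hinge of the ambient graph, and no disk can contain a longer hinge than the whole graph does. Combining the two observations gives the equivalence.

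With this lemma in hand the converse direction is short. Let $G$ be $s$--bounded-star and let $v'$ be any vertex of $F(G)$; write $v'\in f(G^r_v)$ for the corresponding $v\in V(G)$, which exists because $F(G)=\bigcup_v f(G^r_v)$. The past-subgraph lemma (Lemma~\ref{lem:pastsubgraph}) then exhibits $F(G)^{r'}_{v'}$ as a subgraph of $F(D)$ with $D=G^{2rr'+r+r'}_v\in\mathcal{D}^{2rr'+r+r'}$. Now $D$ is a disk of the $s$--bounded-star graph $G$, hence $s$--bounded-star itself (its hinges are ambient hinges, as above), so the hypothesis gives that $F(D)$ is $s$--bounded-star; since a subgraph of an $s$--bounded-star graph is $s$--bounded-star, so is $F(G)^{r'}_{v'}$. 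As $v'$ was arbitrary, the localization lemma yields that $F(G)$ is $s$--bounded-star, i.e. $F$ is bounded-star preserving.

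Decidability is then a routine finiteness argument: up to isomorphism there are only finitely many disks in $\mathcal{D}^{2rr'+r+r'}$ (bounded radius, degree $n+1$, empty internal states), each is finite so $F(D)=\bigcup_{v} f(D^r_v)$ is computable, and testing whether a given finite graph is $s$--bounded-star only requires searching for hinges up to length $s+1$, of which there are finitely many. I expect the main obstacle to be the bookkeeping of the localization lemma rather than any deep idea: one has to verify carefully that hinges are preserved in both directions under passing to an induced disk and under the subgraph inclusion supplied by Lemma~\ref{lem:pastsubgraph}, and that the constants line up so that the midpoint of a length-$(s+1)$ hinge really captures the whole hinge within radius $r'$---which is exactly why $s$ is set to $2r'$ and why the radius-$r'$ disk is defined to reach distance $r'+1$.
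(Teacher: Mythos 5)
Your proposal is correct and takes essentially the same route as the paper: the forward direction is dismissed as trivial, and the converse rests on Lemma~\ref{lem:pastsubgraph} applied at the midpoint of a putative hinge of length $2r'+1$, with exactly the same constant bookkeeping making the hinge fit inside a radius-$r'$ disk of $F(G)$ and hence inside $F(G^{2rr'+r+r'}_v)$. The only differences are presentational---the paper argues by contradiction and leaves implicit the localization facts you spell out (prefixes of hinges are hinges, hinges transfer between a graph and its disks) as well as the routine finiteness argument for decidability.
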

\begin{proof}$[\Rightarrow]$ Trivial.
$[\Leftarrow]$ By contradiction suppose that there is an $s$--bounded-star graph $G$ such that $F(G)$ has an hinge $h$ of size $s'=2r'+1$, and yet that all $s$--bounded-star disks of radius ${2rr'+r+r'}$ are mapped into $s$--bounded-star graphs. Next, take $v'$ in the middle of $h$, and $v$ in $G$ such that $v'$ in $f(G^r_v)$. By Lemma \ref{lem:pastsubgraph}, $h$ appears in $F(G_v^{2rr'+r+r'})$, which contradicts our hypothesis.    
\end{proof}

\noindent  {\em Torsion-free preserving.} Second, amongst bounded-star preserving CGD, we will restrict to those that preserve the property of not having torsion. 
\begin{definition}[Torsion-free preserving]\label{prop:notorsion}
An $s$--bounded-star preserving CGD $F$ is \emph{torsion-free preserving} if and only if for all $s$--bounded-star graph $G$ without torsion, $F(G)$ is without torsion.
\end{definition}
\begin{proposition}
Consider $F$ an $s$--bounded-star CGD induced by a local rule $f$ of radius $r$. 
$F$ is torsion-free preserving if and only if for any $s$--bounded-star $D$ in ${\cal D}^{2rr'+r+r'}$ without torsion, $F(D)$ is also without torsion. Therefore, given a local rule $f$, it is decidable whether its induced $F$ is torsion-free preserving. 
\end{proposition}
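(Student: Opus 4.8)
The plan is to follow the template of Proposition~\ref{prop:boundedstar}: one implication is immediate, and the converse is obtained by contraposition, pulling a forbidden structure in the image back into a single bounded disk of the source via the Past subgraph lemma (Lemma~\ref{lem:pastsubgraph}). The only change is that the forbidden structure is now a torsion instead of an oversized hinge; but since a torsion is a cyclic hinge, and since $F$ is assumed $s$-bounded-star preserving, that cyclic hinge is automatically of bounded length inside the image graph, which is exactly what makes the pull-back argument go through. Throughout, $r'$ is the radius fixed by $s = 2r'$, as in Proposition~\ref{prop:boundedstar}.

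The forward implication $[\Rightarrow]$ is trivial: any $s$-bounded-star torsion-free disk $D \in \mathcal{D}^{2rr'+r+r'}$ is in particular an $s$-bounded-star graph without torsion, so torsion-free preservation of $F$ gives that $F(D)$ is torsion-free.

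For $[\Leftarrow]$ I would argue by contradiction. Assume the local condition holds but $F$ is not torsion-free preserving, so there is an $s$-bounded-star torsion-free graph $G$ whose image $F(G)$ contains a torsion, i.e.\ a cyclic hinge $h$ identifying two distinct $k$-faces at some vertex. Since $G$ is $s$-bounded-star and $F$ is $s$-bounded-star preserving, $F(G)$ is $s$-bounded-star, and hence $h$ has length at most $s = 2r'$. Picking any vertex $v'$ of $h$, the cyclicity of $h$ together with this length bound places every vertex and every edge of $h$ within graph distance $r'$ of $v'$, so the whole torsion $h$ sits inside the disk $F(G)^{r'}_{v'}$ with none of its edges broken into semi-edges. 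Now choose $v \in G$ with $v' \in f(G^r_v)$ and set $D = G^{2rr'+r+r'}_v$. By Lemma~\ref{lem:pastsubgraph}, $F(G)^{r'}_{v'}$ is a subgraph of $F(D)$. Being a disk of the $s$-bounded-star torsion-free graph $G$, $D$ is itself $s$-bounded-star and torsion-free (breaking edges into semi-edges can only destroy hinges, never create them) and lies in $\mathcal{D}^{2rr'+r+r'}$, so by hypothesis $F(D)$ is torsion-free. But $h$, a genuine cyclic hinge contained in the subgraph $F(G)^{r'}_{v'} \subseteq F(D)$, is then a torsion of $F(D)$ as well, a contradiction. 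Decidability follows exactly as before: there are finitely many disks of $\mathcal{D}^{2rr'+r+r'}$ up to isomorphism, on each of which the $s$-bounded-star, torsion-free, and ``$F(D)$ torsion-free'' predicates are all effectively checkable.

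The main obstacle, and the place where the bound $s = 2r'$ is used in an essential way, is the containment step: one must be sure that the torsion discovered in the global image $F(G)$ survives \emph{as a torsion} inside the image of one finite disk. This requires, first, that $s$-bounded-star preservation caps the length of the torsion hinge at $2r'$ so that it fits within a radius-$r'$ ball around any of its vertices, and second, that all the edges realizing this cyclic hinge are interior to $F(G)^{r'}_{v'}$ and hence are not severed when one passes to the subgraph guaranteed by Lemma~\ref{lem:pastsubgraph}. Once these two points are secured, the rest is a routine transcription of the bounded-star argument.
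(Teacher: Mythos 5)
Your proof is correct and takes essentially the same approach as the paper: the paper's own proof is literally ``As for Prop.~\ref{prop:boundedstar}'', i.e.\ the trivial forward direction plus the contradiction argument that pulls the offending structure back into a single disk via Lemma~\ref{lem:pastsubgraph}, with $s$--bounded-star preservation capping the torsion's cyclic hinge at length $s=2r'$ exactly as you use it. Your write-up simply makes explicit the details the paper leaves implicit (containment of the cyclic hinge in $F(G)^{r'}_{v'}$, and that disks of a torsion-free $s$--bounded-star graph remain torsion-free and $s$--bounded-star), all of which are sound.
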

\begin{proof} As for Prop. \ref{prop:boundedstar}. \end{proof}

\noindent  {\em Discrete-manifold preserving.} Third, amongst torsion-free bounded-star preserving CGD, we will restrict to those that preserve the property of being a discrete manifold. 
\begin{definition}[Discrete-manifold preserving]
An torsion-free $s$--bounded-star preserving CGD $F$ is \emph{discrete-manifold preserving} if and only if for all $s$--bounded-star discrete manifold $G$, then $F(G)$ is a discrete manifold.
\end{definition}
\begin{proposition}
Consider $F$ a torsion-free $s$--bounded-star preserving CGD induced by a local rule $f$ of radius $r$. 
$F$ is discrete-manifold preserving if and only if for any $s$--bounded-star discrete manifold $D$ in ${\cal D}^{2rr'+r+r'}$, $F(D)$ is also a discrete manifold. Therefore, in dimension $n\leq 3$, given a local rule $f$, it is decidable whether its induced $F$ is discrete-manifold preserving. 
\end{proposition}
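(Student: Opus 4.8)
The plan is to mirror the proof of Proposition~\ref{prop:boundedstar} exactly, replacing the ``long hinge'' defect by a ``bad star'' defect, and then to add a decidability argument specific to the reducibility-to-$\Delta_n$ condition. The forward direction $[\Rightarrow]$ is immediate: an $s$--bounded-star discrete-manifold disk $D\in\mathcal{D}^{2rr'+r+r'}$ is in particular an $s$--bounded-star discrete manifold, so if $F$ preserves discrete manifolds then $F(D)$ is a discrete manifold. For $[\Leftarrow]$ I would argue by contradiction. Suppose $F$ is not discrete-manifold preserving, so there is an $s$--bounded-star discrete manifold $G$ with $F(G)$ not a discrete manifold. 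By the definition of discrete manifold this failure is witnessed at a single vertex $u'$ of $F(G)$, namely one whose star admits no sequence of graph-local Pachner moves onto $\Delta_n$. The crucial localization step is that, $F$ being $s$--bounded-star preserving and $G$ being $s$--bounded-star, the graph $F(G)$ is itself $s$--bounded-star; hence every hinge issued from $u'$ has length at most $s$, so $\operatorname{Star}(F(G),u')$, together with all the hinge-paths witnessing its geometrical neighbors, sits inside a disk $F(G)^{\rho}_{u'}$ whose radius $\rho$ is controlled by $s$. This is the exact analogue of the hinge of size $s'=2r'+1$ fitting inside $F(G)^{r'}_{v'}$ in Proposition~\ref{prop:boundedstar}.

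I would then apply the Past subgraph Lemma~\ref{lem:pastsubgraph} with $v'=u'$: choosing $v$ in $G$ such that $u'\in f(G^r_v)$, the disk $F(G)^{\rho}_{u'}$ is a subgraph of $F(D)$ with $D:=G^{2rr'+r+r'}_v$ (reading the stated radius with output scale $r'$, $s=2r'$, and checking that $\rho$ is indeed captured at this scale). Since $F(G)^{\rho}_{u'}$ contains the whole star of $u'$ and, being large enough to display all its incident hinges, determines it unambiguously, one gets $\operatorname{Star}(F(D),u')=\operatorname{Star}(F(G),u')$, which is the same non-reducible star; hence $F(D)$ is not a discrete manifold. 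As $D$ is an $s$--bounded-star discrete-manifold disk in $\mathcal{D}^{2rr'+r+r'}$, the hypothesis forces $F(D)$ to be a discrete manifold, a contradiction.

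Two points deserve care and I expect them to be the main obstacles. The first is verifying that the pulled-back disk $D=G^{2rr'+r+r'}_v$ really is an $s$--bounded-star discrete manifold, so that the hypothesis applies to it: the bounded-star part is automatic, since any sub-disk of an $s$--bounded-star graph is $s$--bounded-star, but one must argue that truncating an $s$--bounded-star discrete manifold to a disk again yields a discrete manifold, its boundary stars reducing to $\Delta_n$ by shellings, and here the bounded-star hypothesis is what prevents the truncation from producing a non-manifold neighborhood. This requirement is genuinely new compared with Proposition~\ref{prop:boundedstar}, whose hypothesis placed no manifold condition on $D$. The second obstacle, and the reason the dimension enters, is the decidability claim: up to isomorphism there are only finitely many disks in $\mathcal{D}^{2rr'+r+r'}$ (bounded size, degree $n+1$, finite gluing set $\Gamma$, and $\Sigma=\varnothing$), so one enumerates them, retains those that are $s$--bounded-star discrete manifolds, and tests whether $F(D)$ is a discrete manifold for each. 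Deciding ``is a discrete manifold'' reduces, vertex by vertex, to deciding whether a star is sent onto $\Delta_n$ by graph-local Pachner moves, i.e. whether it is homeomorphic to a ball; by Corollary~\ref{th:homeo} and the cited results this is decidable precisely for $n\leq 3$ and undecidable for $n\geq 4$, which is exactly the dimensional restriction in the statement.
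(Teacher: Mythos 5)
Your proposal is correct and is essentially the paper's own proof, which consists of the single line ``As for Prop.~\ref{prop:boundedstar}'' (i.e.\ the trivial $[\Rightarrow]$ direction and the $[\Leftarrow]$ contradiction pulled back through the Past subgraph Lemma~\ref{lem:pastsubgraph}) together with the remark that recognizing a discrete manifold---equivalently, testing reducibility of each star to $\Delta_n$---is decidable for $n\leq 3$ by the cited results of \cite{Lazarus} and \cite{Kuperberg}. The two delicate points you isolate (that the pulled-back disk $G^{2rr'+r+r'}_v$ is itself an $s$--bounded-star discrete manifold, and the exact radius bookkeeping needed to capture the whole star of $u'$ inside the pulled-back region) are passed over silently in the paper, so your write-up is if anything more explicit than the original.
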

\begin{proof}
As for Prop. \ref{prop:boundedstar}. Checking whether $F(G_v^{2rr'+r+r'})$ is a discrete-manifold is indeed possible in dimension $n\leq 3$, cf. \cite{Lazarus} and \cite{Kuperberg} (Prop. 3.1).
\end{proof}

\begin{definition}[CDDM]\label{def:CDDM}
A {\em Causal Dynamics of Discrete Manifolds} is a torsion-free $s$--bounded-star discrete-manifold preserving CGD.
\end{definition}

\VL{
\section{Examples}
\subsection*{Lattice gas on curved surfaces}
This example is based on XXX and implements a model of particles moving across a surface whose curvature may locally change over time through Pachner moves. The particles move on a  lattice of equilateral triangles. At each site there are $7$ different velocity vectors and each vector can be occupied by at most one particle (see Fig. \ref{fig:meyv}). Vectors $0$ to $5$ correspond to particles in motion whereas vector $6$ corresponds to idle particle of twice the mass the a moving particle (see two particles collision rules).
\begin{figure}
\includegraphics[scale=1.2]{Pictures/meyer1.pdf}
\caption{\label{fig:meyv}A site and its velocity vectors. Each circle can be either empty (no particle) or full.}
\end{figure}

 We encode the presence of particles moving along these vectors by labeling the sites in the set $\Sigma=\{0,1\}^7$ ($128$ possible states). In this example rotation of sites have to take the label into account in order to preserve the momentum of the particles and thus the dynamics of the system. To do so, we use the following permutations $p_{ports}$ and $p_{label}$ in our rotations: 
\begin{itemize}
\item $p_{ports}=(a\ b\ c)$
\item $p_{label}:(i_0,i_1,i_2,i_3,i_4,i_5,i_6)\mapsto (i_2,i_3,i_4,i_5,i_0,i_1,i_6)$
\end{itemize}
Less formally, we rotate the triangle using $p_{ports}$ while preserving the velocity of the particles using $p_{labels}$ (see Fig. ??).
The dynamics of the system is summed up in five types of rules:
$[$Simple propagation$]$ This is the simplest case of evolution. When no particle interaction occurs, particles simply propagate forward according to their velocity (see Fig. \ref{fig:meyprop}).

$[$Two particles collision$]$ There are two cases of frontal collision between particles. When two particles collide frontally on the same site, they disappear and create a rest particle. If a rest particle is already present on the site, four particles are emitted in four different directions (see Fig. \ref{fig:meycol1} and \ref{fig:meycol2}. If two (or four) particles have opposite velocities and are on two neighboring sites, a ``two-to-two'' Pachner move is applied on the sites while the particles cross each other.

$[$Expansion$]$ When three particles collide on the same site with velocities $0,2,4$ or $1,3,5$, the site is expended in three different triangles and the particles propagate forward (see Fig. \ref{fig:meyinfl}).

$[$Collapsing$]$ This rule is the reverse of the expansion rule. When three particles exit a group of three triangles as on Fig. \ref{fig:meycoll}, the three triangles are collapsed in one single site and the particles are moved forward, their velocity remaining unchanged.

\begin{figure}
\includegraphics[scale=1.2]{Pictures/meyer2.pdf}
\caption{\label{fig:meyprop}Simple propagation of particle (6 different rules + all non colliding cases).}
\end{figure}
\begin{figure}
\includegraphics[scale=1.2]{Pictures/meyer3.pdf}
\caption{\label{fig:meycol1}Two particles collision. In that case, the site is unoccupied, and the collision creates a rest particle.}
\end{figure}
\begin{figure}
\includegraphics[scale=1.2]{Pictures/meyer4.pdf}
\caption{\label{fig:meycol2}Two particle collision. The site is occupied, the rest particle (of mass 2) and the two particles are spreading in all possible directions.}
\end{figure}
\begin{figure}
\includegraphics[scale=1.2]{Pictures/meyer5.pdf}
\caption{\label{fig:meycoll}Collapsing rule. The three particles are simultaneously leaving a group of three triangles, that is collapsed in a single triangle.}
\end{figure}
\begin{figure}
\includegraphics[scale=1.2]{Pictures/meyer6.pdf}
\caption{\label{fig:meyinfl}Inflating rule. The three particles are traveling at the same time on a single triangle. The triangle will be inflated in three distinct triangles.}
\end{figure}
}

\section{Conclusion}\label{sec:conclusion}

{\em Results in context.} In \cite{ArrighiCGD,ArrighiIC,ArrighiCayleyNesme} two of the authors, together with Dowek and Nesme, generalized Cellular Automata theory to arbitrary, time-varying graphs. I.e. they formalized the intuitive idea of a labeled graph which evolves in time, subject to two natural constraints: the evolution does not propagate information too fast; and it acts everywhere the same. Some fundamental facts of Cellular Automata theory were shown to carry through, for instance that these Causal Graph Dynamics (CGD) admit a characterization as continuous functions and that their inverses are also CGD. 

The motivation for developing these CGD was to ``free Cellular Automata off the grid'', so as to be able to model any situation where agents interact with their neighbors synchronously, leading to a global dynamics in which the states of the agents can change, but also their topology, i.e. the notion of who is next to whom. A first motivating example was that of a mobile phone network. A second example was that of particles lying on a surface and interacting with one another, but whose distribution influences the topology the surface (cf. Heat diffusion in a dilating material, discretized General Relativity \cite{Sorkin}). However, CGD seemed quite appropriate for modeling the first situation (or at least a stochastic version of it), but not the second. Indeed, having freed Cellular Automata off the grid, one could no longer interpret arbitrary graphs as surface, in general. 

The present paper solves this problem by proposing a rigorous definition of ``Causal Dynamics of Complexes'' (CDC) and ``Causal Dynamics of Discrete Manifolds'' (CDDM). Essentially this shows that CGD can be ``tied up again to complexes and even to discrete manifolds'', at the cost of additional restrictions: rotation-commutation (CDC), bounded-star preservation, torsion-free preservation, discrete-manifold preservation (CDDM). The first restriction allows us to freely rotate simplices. The second restriction allows us to map geometrical distances into graph distances. The third restriction makes sure that no torsion gets introduced. The fourth restriction makes sure that the neighborhood of every point remains a ball. The first and second are decidable independently. Imposing the second makes the third decidable, and fourth, but in dimensions $n<4$ only. An earlier version investigated the $2$--dimensional case in order to gain intuitions \cite{ArrighiSURFACES}. This paper provides its non-trivial generalization to $n$-dimensions: the third and fourth conditions, for instance, were vacuous in the $2$--dimensional case. In order to tackle it, we translated the notion of manifold homeomorphism the vocabulary of labeled graphs.

Notice that, since these CDC and CDDM are a specialization of CGD by construction, several theoretical results about them follow as mere corollary from \cite{ArrighiCGD,ArrighiIC,ArrighiCayleyNesme}---that we have not mentioned. For instance, CDC/CDDM of radius $1$ are universal, composable, characterized as the set of continuous functions from complexes to complexes with respect to the Gromov-Hausdorff-Cantor metric upon isomorphism classes. These results deserve to be made more explicit, but they already are indicators of the generality of the model.

{\em Comparison with Crystallizations/Gems.} This paper conducted a thorough comparison between discrete geometries and graphs, by investigating the natural encoding of complexes into their dual graphs. This encoding was made precise. The discrepancy between geometrical distance and graph distance was analyzed. The notion of manifold was characterized, through a graph-local version of Pachner moves. Another, very well-developed correspondence between simplicial complexes and labeled graphs goes under the name of `crystallizations' \cite{Ferri}. Phrased in the vocabulary of Def. \ref{def:graphs}, this means restricting to bipartite graphs (i.e. w.r.t. to labels in $\Sigma=\{0,1\}$, say) that are edge--colored (i.e. edges are between equal ports) and have, as gluings, the identity. Because this gluing is an even permutation, $0$--labeled vertices are oriented one-way, and $1$--labeled vertices are oriented the other way. These constraints may seem cumbersome at first; for instance constructing a sphere becomes much more involved than Def. \ref{def:sphere}. Yet, a closer look shows a key advantage: by construction, crystallizations do not have torsion. The subset of crystallizations that represent discrete manifolds is usually referred to as `gems' (i.e. graph-encoded manifolds). Homeomorphism between gems can again be captured by moves. Traditionally the moves that have been studied are the so-called `dipole moves', but unfortunately these are not graph-local (a global condition needs be checked prior to application). Lately, however, \cite{Izmestiev} developed an equivalent of Bistellar moves, called `cross-flips moves', which captures homeomorphism between closed discrete manifolds, in a graph-local way. This has been extended to discrete manifolds with borders in \cite{Juhnke}---but the (inverse) shellings are again not graph-local. Yet, \cite{Juhnke} also contains the gems--version the result by \cite{Casali} that allowed us to prove that graph-local (inverse) shellings are enough. Thus, all the results of this paper can readily be ported to crystallizations/gems. Still, there will be a price to pay: the number of cross-flip moves is in $O(2^n)$ \cite{Juhnke}, whereas bistellar moves grow as $O(n)$.

\section*{Acknowledgments}
This work has been funded by the ANR-12-BS02-007-01 TARMAC grant and the STICAmSud project 16STIC05 FoQCoSS. The authors acknowledge enlightening discussions Gilles Dowek, Ivan Izmestiev, Pascal Lienhardt, Luca Lionni, Christian Mercat, Michele Mulazzani, Zizhu Wang. 

\bibliographystyle{plainurl}
\bibliography{biblio}

\appendix

\section{Proofs of locality of rotation-commutation}\label{app:pfrotationcommuting}

\begin{lemma} \label{lem:rot_comm}
Let $G$ be a graph and $r_u, r'_v$ two vertex rotations, with $u, v$ two distinct vertices of $V(G)$. We have 
$(r_u \circ r'_v) G = (r'_v \circ r_u) G$. 
\end{lemma}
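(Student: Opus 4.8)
The plan is to exploit the fact that a vertex rotation $r_u$ is an entirely \emph{local} operation: it rewrites only those semi-edges and edges that are incident to $u$, and leaves everything else untouched. Concretely, I would first partition $S(G)\cup E(G)$ into four classes according to their incidence with the two distinct vertices $u$ and $v$: (i) semi-edges and edges touching neither $u$ nor $v$; (ii) those touching $u$ but not $v$; (iii) those touching $v$ but not $u$; and (iv) the edges joining $u$ and $v$ directly, namely $(u\ports p,\gamma,v\ports q)$ together with their reverses $(v\ports q,\gamma^{-1},u\ports p)$. Since neither rotation changes which vertices an edge connects, this partition is preserved by both $r_u$ and $r'_v$, so it suffices to check commutation class by class.

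For classes (i)--(iii) the verification is immediate. On class (i) both rotations act as the identity. On class (ii) only $r_u$ acts non-trivially, while $r'_v$ is the identity since $v$ is not an endpoint, so both composites reduce to applying $r_u$; class (iii) is symmetric. Hence on these three classes $(r_u\circ r'_v)G$ and $(r'_v\circ r_u)G$ agree trivially.

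The only genuine case is class (iv), and this is where I expect the main (though modest) work to lie. Here I would simply compute both orders of application on a directed edge $(u\ports p,\gamma,v\ports q)$. Applying $r_u$ (with $u$ as source) then $r'_v$ (with $v$ as target) yields $(u\ports r(p),\, r'\circ\gamma\circ r^{-1},\, v\ports r'(q))$, while applying $r'_v$ first then $r_u$ yields $(u\ports r(p),\, (r'\circ\gamma)\circ r^{-1},\, v\ports r'(q))$; the two gluings coincide by associativity of composition. The reason is structural: $r_u$ relabels the \emph{source} port by $p\mapsto r(p)$ and right-composes the gluing with $r^{-1}$, whereas $r'_v$ relabels the \emph{target} port by $q\mapsto r'(q)$ and left-composes with $r'$; these act on independent data---the source port, respectively the target port---and on the shared gluing as a right-, respectively left-multiplication, which commute. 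The same computation on the reverse edge $(v\ports q,\gamma^{-1},u\ports p)$ gives $(v\ports r'(q),\, r\circ\gamma^{-1}\circ (r')^{-1},\, u\ports r(p))$ in both orders, which is moreover the inverse of the edge obtained above, confirming that the graph conditions are respected.

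The crux is therefore entirely contained in class (iv): one must keep track of whether the rotated vertex plays the role of source or target in each edge, so as to apply the correct transformation rule, and then observe that the effect of $r_u$ and of $r'_v$ operate on independent data and commute by associativity. Since the paper has set $\Sigma=\varnothing$ the labels play no role; had $\Sigma$ been non-trivial, the label updates $\sigma(u)\mapsto h(r)(\sigma(u))$ and $\sigma(v)\mapsto h(r')(\sigma(v))$ would still commute, as they modify the labels of the distinct vertices $u$ and $v$.
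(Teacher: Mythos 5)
Your proof is correct and follows essentially the same route as the paper's: both reduce the claim to the edges of the form $(u\ports p,\gamma,v\ports q)$ joining the two rotated vertices (all other edges and semi-edges being trivially unaffected by one of the two rotations), and then verify that in either order of application the gluing becomes $r'\circ\gamma\circ r^{-1}$ with ports $u\ports r(p)$ and $v\ports r'(q)$. Your additional checks of the reverse edge and of the label updates are harmless elaborations of the same computation.
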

\begin{proof}
We just check the equality above, for each vertex and edge. This is trivial except for those edges of the form $(u:p, \gamma, v:q)$.
\begin{description}
\item[Case $(r_u\circ r'_v)$.] The gluing $\gamma$ goes into $r'\circ \gamma$ and then into $r^\prime\circ \gamma\circ r^{-1}$.
\item[Case $(r'_v\circ r_u)$.] The gluing $\gamma$ goes into $\gamma\circ r^{-1}$ of $u:r(p)$ and then into $r'\circ \gamma\circ r^{-1}$.
\end{description}
\end{proof}
As a consequence of Lemma \ref{lem:rot_comm} it makes sense to define $(\overline{r}_i / u)$ as the ordered gathering of the vertex rotations at $u$ in $\overline{r}_i$. Then, two rotation sequences $\overline{r}_1, \overline{r}_2$ can be said to be consistent with one another if, whenever they act upon a common vertex $u$, we have $(\overline{r}_1 / u)=(\overline{r}_2 / u)$.
\begin{lemma}\label{lem:rot_merge}
For all finite set of graphs $G_1,...,G_n$ and for all consistent set of rotation sequences $\overline{r}_1,...,\overline{r}_n$, if $G_1,...,G_n$ are consistent with each other, and $\overline{r}_1 G_1,...,\overline{r}_n G_n$ are consistent with each other, then there exists a rotation sequence $\overline{r}$ such that:
$$ \bigcup_{i\in\{1,...,n\}} \overline{r}_i G_i =\overline{r} \bigcup_{i\in\{1,...,n\}} G_i $$
\end{lemma}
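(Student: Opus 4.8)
The plan is to build $\overline{r}$ vertex by vertex from the local content of the $\overline{r}_i$, and then to verify the claimed identity portwise. First I would normalise each sequence using Lemma~\ref{lem:rot_comm}: because rotations at distinct vertices commute, the action of $\overline{r}_i$ on $G_i$ is entirely captured by the family of net rotations $\rho^{(i)}_u:=(\overline{r}_i/u)$, one even permutation per vertex $u\in V(G_i)$ (the identity when $\overline{r}_i$ does not touch $u$). The candidate is then $\overline{r}:=\prod_u \rho_u$, the order of the factors being irrelevant by Lemma~\ref{lem:rot_comm}, where for each $u\in V(\bigcup_i G_i)=\bigcup_i V(G_i)$ I set $\rho_u:=\rho^{(i)}_u$ for $i$ the least index with $u\in V(G_i)$.

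The crux is to show that this choice faithfully reproduces every $\overline{r}_i G_i$, and this is exactly where the two consistency hypotheses enter. Fix $u\in V(G_i)\cap V(G_j)$ and split its ports into \emph{edge-ports} (carrying an edge) and \emph{semi-ports} (carrying a semi-edge). Consistency of $G_i$ and $G_j$ makes the local structure at $u$ identical in both; moreover, any edge $(u\ports p,\gamma,v\ports q)$ incident to $u$ forces $v\in V(G_i)\cap V(G_j)$ as well, so such an edge lies in both $E(G_i)$ and $E(G_j)$. Applying $\overline{r}_i$ and $\overline{r}_j$ to this common edge and then invoking consistency of $\overline{r}_i G_i$ with $\overline{r}_j G_j$ at the shared vertices $u$ and $v$ forces $\rho^{(i)}_u$ and $\rho^{(j)}_u$ to agree on every edge-port $p$, together with the matching of the corresponding gluings and of the $v$-side. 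On the semi-ports the two permutations need only agree as sets, since a semi-edge carries no data beyond the port at which it sits; and when $\overline{r}_i$ and $\overline{r}_j$ both genuinely act on $u$, the hypothesis that the sequences form a consistent family, $(\overline{r}_i/u)=(\overline{r}_j/u)$, upgrades this to full equality $\rho^{(i)}_u=\rho^{(j)}_u$.

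With $\overline{r}$ so defined I would then compare $\overline{r}(\bigcup_i G_i)$ and $\bigcup_i \overline{r}_i G_i$ at each port of each vertex. For an edge $e=(u\ports p,\gamma,v\ports q)\in E(G_k)$, both its endpoints are shared with $G_{i(u)}$ and $G_{i(v)}$, so $e$ belongs to those graphs too; by the previous paragraph $\rho_u$ and $\rho_v$ act on the edge-ports $p$ and $q$ exactly as $\rho^{(k)}_u$ and $\rho^{(k)}_v$ do, hence the $\overline{r}$-image of $e$ coincides with its image inside $\overline{r}_k G_k$. Conversely every edge of $\bigcup_i\overline{r}_i G_i$ arises this way, and for semi-edges the images land in the correct common set of ports, which is all that equality of graphs requires. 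I expect the one real obstacle to be this bookkeeping at shared vertices: one must recognise that, once consistency has pinned down the behaviour on edge-ports, the only remaining freedom is a permutation of the semi-ports, which is invisible to the union, so that a single $\rho_u$ can serve all the graphs $G_i$ containing $u$ simultaneously.
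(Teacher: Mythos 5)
Your proof is correct and takes essentially the same route as the paper: the paper likewise assembles $\overline{r}$ from the per-vertex net rotations $(\overline{r}_i/u)$ --- its decomposition $\Pi_1\circ\Pi_2\circ\Pi_3$ for two graphs is exactly your least-index product, with well-definedness at shared vertices coming from consistency of the rotation sequences --- and then verifies the identity by the same edge-by-edge case analysis at private versus shared vertices. The only differences are cosmetic: the paper first reduces to the union of two graphs, and it reads $(\overline{r}_i/u)=(\overline{r}_j/u)$ on all shared vertices directly off the hypothesis, whereas your attempt to re-derive pointwise agreement on edge-ports from consistency of the images is redundant (and, as a standalone claim, slightly overstated, since image-consistency only forces agreement of the rotated local structures, not of the permutations themselves --- a nontrivial local automorphism would evade it); nothing in your argument actually depends on that digression.
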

\begin{proof}
Notice that we only need to prove this result for the union of two graphs.
Let us consider two consistent graphs $G_1,G_2$ and two consistent rotation sequences $\overline{r}_1,\overline{r}_2$ such that $\overline{r}_1 G_1,\overline{r}_2 G_2$ are consistent. We are now going to show that 
$$ \overline{r}_1 G_1 \cup \overline{r}_2 G_2 =  {\Pi_1 \circ \Pi_2 \circ \Pi_3} {\left(G_1 \cup G_2\right)}$$
$$\textrm{with}\quad \Pi_1 = \left(\prod_{u \in G_1 \setminus G_2} (\overline{r}_1 / u) \right), \Pi_2 = \left(\prod_{u \in G_2 \setminus G_1} (\overline{r}_2 / u) \right), \Pi_3 = \left(\prod_{u \in G_1 \cap G_2}(\overline{r}_1 / u) \right).$$
Notice that $\Pi_3$ is well-defined, because we assumed the rotation sequences to be consistent so $(\overline{r}_1/u) = (\overline{r}_2/u)$.
The proof will be similar to the proof of Lemma \ref{lem:rot_comm}, as we will show that each piece of the graph ends up being the same left and right of the equality. 
We will avoid repetition of symmetric cases.
\begin{description}
\item[Edges] Without loss of generality let us consider and edge $e = (u:p, \gamma, v:q)$ in $G_1$ starting at $u:p$, the following cases are possible:
\begin{itemize}
\item if $u \in G_1 \setminus G_2$ and $u:p$ are connected, then on the left hand side we can keep track of $e$ from $G_1$ to $\overline{r}_1 G_1$, which is modified by $(\overline{r}_1 /u) \in \Pi_1$ and by $(\overline{r}_1/v)$, whilst any other rotation in $\overline{r}_1$ leaves it unchanged. On the right hand side $e$ is modified by $(\overline{r}_1/u) \in \Pi_1$ and by $(\overline{r}_1/v)$, which may be in $\Pi_1$ or $\Pi_3$---so it ends up being the same.
\item if $u \in G_1 \cap G_2$ and $u:p$ are connected only in $G_1$, then on the left hand side $e$ is modified by $(\overline{r}_1/u)$ and by $(\overline{r}_1/v)$. On the right hand side $e$ is modified by $(\overline{r}_1/u) \in \Pi_3$ and by $(\overline{r}_1/v)$, which may be in $\Pi_1$ or $\Pi_3$---so it ends up being the same.
\item if $u \in G_1 \cap G_2$ and $u:p$ are connected in both graphs, then it is clear that $e$ is modified by $(\overline{r}_1/u) \in \Pi_3$ and $(\overline{r}_1/v) \in \Pi_3$ on both sides.
\end{itemize}
\end{description}
\end{proof}

\begin{proposition}
Let $F$ be a CGD. $F$ is rotation-commuting if and only if there exists a strongly-rotation-commuting local rule $f$ which induces $F$.
\end{proposition}

\begin{proof}
$[\Leftarrow]$ Let us consider a rotation-commuting local rule $f$ of radius $r$ inducing a CGD $F$. Let $G$ be a graph and $u$ a vertex of $G$. The following sequence of equalities proves that $F$ is rotation-commuting:
\[
\begin{array}{lclr}
   F(\overline{r} G) & = &\displaystyle{\bigcup_{v\in G}} f(\overline{r} G^r_v) &\\
   & = & \displaystyle{\bigcup_{v\in G}} \overline{r}_v^* f(G^r_v) &\textrm{  (using $f$ rotation-commuting) }
\end{array}
\]
Using Lemma \ref{lem:rot_merge}, we have the existence of a rotation sequence $\overline{r'}$ such that:
$$ F(\overline{r} G)= \overline{r'} \bigcup_{v\in G} f(G^r_v)=\overline{r'} F(G) $$
\noindent $[\Rightarrow]$ Let $F$ be a rotation-commuting CGD, and $f$ a local rule inducing $F$. Informally, since $f(G_u^r)$ is included in $F(G)$ we know that as far as orientation is concerned $f$ will indeed be rotation-commuting. However it may still happen that $f$, depending upon the orientation of the vertices in $G_u^r$, will produce a smaller, or a larger, subgraph of $F(G)$. Therefore, we must define some $\tilde{f}$ which does not have this unwanted behavior. Let us consider the following function $\tilde{f}$ from $\mathcal{D}_\pi$ to $\mathcal{G}$, such that for all disk $G_u^r$:
$$ \tilde{f}(G_u^r)= \bigcup_{ \overline{r} } {\overline{r}^*}^{-1} f(\overline{r} G_u^r)  $$
with $\overline{r}^*$ a conjugate of $\overline{r}$ (given by $F$ rotation-commuting), and $\overline{r}$ ranging over the set of rotation sequences that can be applied to $G_u^r$.
\begin{itemize}
\item $\tilde{f}$ is well defined: 
by definition of $\overline{r}^*$ we have that:
\[
\begin{array}{llllr}
 &\forall  \overline{r} ,& f(\overline{r} G_u^r) \subset \overline{r}^* F(G) &\\
 \Rightarrow & \forall  \overline{r}, & {\overline{r}^*}^{-1} f( \overline{r} G_u^r)\subset F(G)& (*)\\
 \Rightarrow & \forall \overline{r}_1, \overline{r}_2 , & {\overline{r}^*_1}^{-1} f(\overline{r}_1 G_u^r)\ \textrm{and}\ {\overline{r}^*_2}^{-1} f(\overline{r}_2 G_u^r)\ \textrm{consistent}&
 \end{array}
\]
Moreover, as there is a finite number of vertices in $G_u^r$ and a finite number of possible vertex rotation, the union over all rotation sequences is a finite union of graphs.
\item $\tilde{f}$ is a local rule: we can check that it inherits of the local rule properties of $f$.
\item $\tilde{f}$ induces $F$:
\[\begin{array}{llll}
 \displaystyle{\bigcup_{v\in G}} \tilde{f}(G_u^r)& = & \bigcup_{v\in G} \left[ \displaystyle{\bigcup_{\overline{r}}} {\overline{r}^*}^{-1} f(\overline{r}G_u^r) \right]&\\
 &=& \displaystyle{\bigcup_{v\in G}} \left[ f(G_u^r) \cup \left(\displaystyle{\bigcup_{\overline{r} \neq id}} {\overline{r}^*}^{-1} f(\overline{r}G_u^r)\right)\right]&\\
 &=& F(G)  \cup \displaystyle{\bigcup_{v\in G}}\left(\displaystyle{\bigcup_{\overline{r} \neq id}} {\overline{r}^*}^{-1} f(\overline{r}G_u^r)\right)& \\
 &=F(G)& ~\textrm{since $(*)$.}
\end{array}\]
\item $\tilde{f}$ is rotation-commuting: let us consider a rotation sequence $\overline{s}$. We will show that $\tilde{f}$ commute with $\overline{s}$:
\begin{align*}
\tilde{f}(\overline{s} G_u^r )&= \bigcup_{\overline{r} } {\overline{r}^*}^{-1} f(\overline{r}\overline{s}G_u^r) \\
&= \bigcup_{\overline{r} } {\overline{s}^* \left(\overline{r}^*\overline{s}^*\right)}^{-1} f(\overline{r}\overline{s}G_u^r) \\
&= \overline{s}^* \bigcup_{\overline{r} } {\left(\overline{r}^*\overline{s}^*\right)}^{-1} f(\overline{r}\overline{s}G_u^r) \\
&= \overline{s}^* \tilde{f}(D)\textrm{, by bijection}
\end{align*}
\item $\tilde{f}$ is strongly-rotation-commuting: this is immediate from the fact that in the above formula $\overline{s}^*$ does not depends on the graph $G_u^r$ considered.
\end{itemize}
\end{proof}

\end{document}